\newcommand{\R}{\mathbb{R}}
\newcommand{\N}{\mathbb{N}}
\newcommand{\B}{\mathbb{B}}
\newcommand{\I}{\mathcal{I}}
\newcommand{\calL}{\mathcal{L}}
\newcommand{\rk}{\text{rk}}
\DeclareMathOperator{\support}{supp}
\renewcommand{\mid}{:}
\renewcommand{\P}{\mathbb{P}}
\renewcommand{\vec}{\mathbf}
\newcommand{\norm}[1]{\lVert #1 \rVert_1}
\newcommand{\dmax}{\delta}
\newtheorem{example}{Example}
\newtheorem{assumption}{Assumption}
\newtheorem{observation}{Observation}
\title{Sensitivity Analysis for Convex Separable Optimization over Integral Polymatroids}
\author{
Tobias Harks\thanks{Institute for Mathematics, University Augsburg, Germany.\newline
(\email{tobias.harks@math.uni-augsburg.de}).}
\and
Max Klimm\thanks{School of Business and Economics, Humboldt University Berlin, Germany.\newline
(\email{max.klimm@hu-berlin.de}).}
\and
Britta Peis\thanks{School of Business and Economics, RWTH Aachen University, Germany.\newline
(\email{britta.peis@oms.rwth-aachen.de}).}
}
\begin{document}
\maketitle
\slugger{sidma}{xxxx}{xx}{x}{x--x}

\begin{abstract}
We study the sensitivity of optimal solutions of convex separable optimization problems over an integral polymatroid base polytope with respect to parameters determining both the cost of each element and the polytope. Under convexity and a regularity assumption 
on the functional dependency of the cost function with respect to the parameters,
we show that reoptimization after a change in parameters can be done by elementary local operations. Applying this result, we derive that starting from any optimal solution there is a new optimal solution to new parameters such that the $L_1$-norm of the difference of the two solutions is at most two times the $L_1$-norm of the difference of the parameters.

We apply these sensitivity results to a class of non-cooperative games with a finite set of players where a strategy of a player is to
choose a vector in a player-specific integral polymatroid base polytope defined on a common set of elements. The players' 
private cost functions are regular, convex-separable and the cost of each element is a non-decreasing function of the own usage of that element and the overall usage of the other players. Under these assumptions,
we establish the existence of a pure Nash equilibrium. The existence is proven by an algorithm computing a pure Nash equilibrium  that runs in polynomial time 
whenever the rank of the polymatroid base-polytope is polynomially bounded. 
Both the existence result and the algorithm generalize and unify previous results appearing in the literature.

We finally complement our results by showing that polymatroids are the maximal combinatorial structure enabling these results. For \emph{any} non-polymatroid region, there is a corresponding optimization problem for which the sensitivity results do not hold. In addition, there is a game where the players' strategies are isomorphic to the non-polymatroid region and that does not admit a pure Nash equilibrium.
 
\end{abstract}

\begin{keywords}
polymatroid; submodular function; sensitivity; reoptimization; integer optimization; non-cooperative games; congestion games; pure Nash equilibrium. 
\end{keywords}

\begin{AMS}
05B35; 
90C27; 
91A10; 
91A46. 
\end{AMS}

\pagestyle{myheadings}
\thispagestyle{plain}
\markboth{HARKS, KLIMM, AND PEIS}{SENSITIVITY ANALYSIS FOR 
INTEGRAL POLYMATROID OPTIMIZATION}

\section{Introduction}
We consider polymatroid optimization problems, where the objective is to distribute $d\in\N$ discrete units among a set of elements $E=\{1,\dots,m\}$ so as to minimize a convex separable
cost function subject to upper bounds on the total amount of units
allocated to subsets of elements. These upper bounds are defined via values of an integral polymatroid rank function $f:2^E\rightarrow \N$.
Formally, we study the following optimization problem:
\begin{equation}
\tag{$P(\vec t, d)$}\label{problem}
\begin{aligned}
 \text{minimize } & \sum_{e\in E} 
C_e(x_e;t_e)\\ 
\text{subject to: } & \sum_{e\in U}x_e \leq f(U)\text{ for all }U\subseteq E \\
& \sum_{e \in E}x_e = d\\
& x_e\in \N \text{ for all }  e \in E,
\end{aligned}
\end{equation}
where the functions $C_e: \N\times \N \to \R_+, e \in E$ are non-decreasing and discrete convex in the first entry and $f$ is a normalized,  monotone and submodular set function. 
The vectors $\vec t=(t_e)_{e \in E}\in\N^{|E|}$ and $d\in\N$ are integral parameters.
For fixed parameters, this problem is a convex-separable optimization problem over an integral polymatroid base polytope and can be solved in polynomial time by greedy algorithms; see Federgruen and Groenevelt~\cite{FedergruenG86b}, Groenevelt~\cite{groenevelt91}, Hochbaum and Shanthikumar~\cite{HochbaumS90} and the book by Fujishige~\cite{fujishige2005submodular}. 
Besides these appealing theoretical properties, the problem has applications in several areas ranging from 
scheduling problems (cf.~Yao~\cite{Yao02} and Krysta et al.~\cite{KrystaSV03}), 
and tree packing and matroid optimization (cf.~Gabow~\cite{Gabow95})
to game-theoretic applications (cf.~He et al.~\cite{HeZZ12}). 

\subsection{Sensitivity Analysis}
\label{subsec:sensitivity_analysis}
Suppose we are given an optimal solution $\vec x^*(\vec t, d)$ with respect
to  $\vec t$ and  $d$. The main question addressed in this paper is: How does the structure of  an optimal solution change after changes to the parameters $\vec t$ and $d$? We motivate this question by a concrete example.
Let  $G=(V,E)$ be a connected undirected graph
with vertex set $V$ and edge set $E\subseteq (V\times V)$.
The objective is to compute $k\in \N$ spanning trees of $G$ so that along each spanning tree a message of unit size can be sent.  If $x_e\in \N$ messages are sent along edge $e$, i.e., $e$ is contained in exactly $x_e$ spanning trees, the resulting (average) delay is defined as \[ C_e(x_e;u_e)=\begin{cases}\frac{1}{u_e-x_e}, &\text{ if } x_e<u_e,\\
+\infty, &\text{ else,}
\end{cases}\]
where $C_e(x_e;u_e)$ is a standard $M/M/1$-delay function frequently used in queueing theory \cite{GaiLK11,KorilisL95}. The parameter $u_e\in \N$ denotes the installed capacity on edge $e$. 
The problem to compute $k$ spanning trees to minimize the total delay can be cast as a convex separable integral polymatroid optimization problem by taking $f$ as the $k$-th multiple of the rank function of the graphic matroid on $G$. In this model,
it is natural to ask how optimal solutions change if the edge capacities or $k$ are changed.

\subsection{Our Results for the Sensitivity of Polymatroid Optimization}
The change of an optimal solution for changed parameters clearly depends on the structural dependency of the objective function and the feasible region on the parameters $\vec t$ and  $d$.
To capture this dependency, we introduce the following concept of \emph{regularity}.
Informally, we call a function $C(x;t)$ regular, if (i)
the (left discrete) partial derivative with respect to the first entry is a nondecreasing function in the parameter; (ii) the left discrete partial derivative with respect to the first entry
is not larger after a unit increase of the parameter $t$ than after a unit increase of $x$.

Our main results (Theorem~\ref{thm:shift}, Theorem~\ref{thm:dem} and Theorem~\ref{thm:gen}) can be informally summarized as follows:
Let $\vec x^*(\vec t, d)$ be an optimal solution of $P(\vec t, d)$
for regular and convex functions $C_e, e \in E$.
Then, for any other integral parameters $\vec t', d'$, there exists a new optimal solution $\vec x^*(\vec t', d')$
\emph{close} to $\vec x^*(\vec t, d)$ in the following sense:
\begin{align*}\norm{\vec x^*(\vec t,d)-\vec x^*(\vec t',d')}\leq  2\norm{\vec t - \vec t'}+|d-d'|.
\end{align*}
Moreover, given  $\vec x^*(\vec t, d)$, we can compute $\vec x^*(\vec t', d')$
by performing $\norm{\vec t - \vec t'}+|d-d'|$ elementary exchange steps.

In the context of the tree packing example with $M/M/1$ queueing functions discussed in \S~\ref{subsec:sensitivity_analysis}, this sensitivity result implies that if the capacity vector is changed,  say by $p$ units, then
there is a new optimal solution  for which at most $2p$ edges of the given spanning trees needs to be changed, and these changes can be efficiently computed from the initial optimal solution. On the other hand, when $k$ is increased by $p$ units, then at most $p|V|$ units need to be changed since adding a single additional spanning tree corresponds to adding $|V|$ new units.

In \S~\ref{sec:violation}, we show that submodularity of the capacity function defining the base polytope is necessary for the sensitivity results above in the following sense: For any capacity function that is not submodular, both sensitivity results (in terms of a change of $\vec t$ and $d$) do not hold anymore. 

\subsection{Our Results for Polymatroid Games}
Our sensitivity results have consequences for the existence
of pure Nash equilibria for a new class of non-cooperative
games that we term \emph{polymatroid games}.
In such a game, there is a finite set of players $N$ and
a finite set of resources $E$. Each player $i\in N$ distributes
her demand $d_i\in \N$ in integral units among the resources subject to player-specific submodular capacity constraints. This way, the resulting strategy space for each player forms an integral polymatroid base polytope (truncated at the player-specific demand).
We further assume that the private cost function is defined
as a sum of regular, player-specific, non-decreasing, convex cost functions $C_{i,e}(x_e;t_e)$, where the parameter $t_e$
is interpreted as the load contribution of other players to resource $e$.
This class of games includes as special cases the following variants of
congestion games that have been treated separately in the literature.

\begin{enumerate}[label=(\roman*),]
\item congestion games with matroidal strategies and player-specific costs (studied by Ackermann et al.~\cite{AckermannRV09}) for the case that the submodular capacity function is the rank function of a matroid;
\item integer-splittable congestion games with singleton strategies (studied by Tran-Thanh et al.~\cite{TranPCRJ11}) for the case that the submodular capacity function is equal to a sufficiently large constant;\label{it:singleton-integer-splittable}
\item integer-splittable congestion games with matroidal strategies (the matroidal counterpart of integer-splittable congestion games studied by Rosenthal~\cite{Rosenthal73integers}) for the case that the submodular capacity function is an integer multiple of the rank function of a matroid.\label{it:matroid-integer-splittable}
\end{enumerate}
We show the existence of a pure Nash equilibrium and devise an algorithm for its computation for polymatroid games thus generalizing and unifying the existence results by Ackermann et al.\ and Tran-Thanh et al. Our algorithm maintains \emph{preliminary demands, strategy spaces, and strategies}
of the players that all are set to zero initially.
In the course of the algorithm the demands of the players are increased iteratively by one unit
and a preliminary pure Nash equilibrium (with respect
to the current demands) is recomputed by following a sequence of best response moves
of the players. While similar algorithms have been proposed before for matroid congestion games (cf. Ackermann et al.~\cite{AckermannRV09}) and integer-splittable singleton games (cf. Tran-Thanh et al.~\cite{TranPCRJ11}), the main difficulty is to show that the sequence of best responses converges even in the more general setting of polymatroid games. This is where the sensitivity results for polymatroid optimization shown in \S~\ref{sec:sensitivity} are applied.
Furthermore, we show that the runtime of the algorithm is bounded by $n^2 m\dmax^{3}$,
where $n$ is the number of players, $m$ the number of resources, and
$\dmax$ is an upper bound on the maximum demand. Thus, for polynomially bounded
$\dmax$, the algorithm is polynomial.

In \S~\ref{sec:violation}, we prove that  submodularity of the players' capacity constraints is necessary for the existence of a pure Nash equilibrium in a strong sense. For \emph{any} monotonic, normalized, strictly positive, and non-submodular set function defining the player-specific base polytope, there is a two-player game without a pure Nash equilibrium. In this sense, our results are best possible and polymatroids are the maximal combinatorial structure guaranteeing the existence of a pure Nash equilibrium.

\subsection{Related Work}
\paragraph{Sensitivity in Polymatroid Optimization}

Fujishige et al.~\cite{FujishigeGHP15} studied
convex-separable minimization problems over polymatroid
base polytopes. They conduct a sensitivity analysis of optimal solutions
for the case that the marginal cost
function is  (component-wise) shifted downwards. As their main result,
they show that for any downwards shift of the marginal cost function, any new optimal
solution has the property that all optimal  cost values of the components
decrease monotonically with respect to the original optimal solution.
Their motivation is to analyze the Braess paradox, and their sensitivity result
implies that there is no Braess paradox in polymatroids.
A difference to our work lies in the fact that we conduct a \emph{quantitative} sensitivity analysis, that is, given an optimal solution for an initial parameter, we quantify exactly the difference of a closest new optimal solution for a changed parameter.

He et al.~\cite{HeZZ12} considered separable-concave maximization problems
over polymatroids in which each cost function component has a second parameter
and is concave in the parameter. They assume that the cost functions are discrete convex in both entries. A further difference to our setting is that we consider as
feasible domain an \emph{integral}
polymatroid base polytope instead of an ordinary polymatroid (defined by a real-valued
submodular function). 
Their main result establishes  (without any further regularity assumptions) that the optimal value function
 as a function of the parameter, or as a function of the support set of the objective function, is submodular. 
 These two result have important consequences 
 for game-theoretical models, because, using the submodularity of the optimal value function,
 they show that the joint replenishment game and the one-warehouse multiple retailer game is submodular and, thus, has
  desirable properties in terms of existence of core solutions.
  In contrast to the work of He et al.~\cite{HeZZ12}, we study in this paper
  sensitivity properties of underlying optimal solutions and not the optimal value function.
  Moreover, the structure of an integral polymatroid base polytope differs
  from an ordinary polymatroid. The integral polymatroid base polytope does not
 form a lattice when considering the component-wise minimum and maximum
 as join and meet. Thus, the sensitivity framework of Topkis~\cite{Topkis98,Topkis78}
 (to which also He et al.~\cite{HeZZ12} refer)  is not directly applicable.

 Cook et al.~\cite{CookGST86} considered general integer programs of
the form $\max\{w x\;\vert\; Ax\leq b\}$ and conducted a proximity and sensitivity
analysis. The proximity analysis is concerned with the difference of optimal solutions for integer linear programs and their continuous relaxations, respectively (the difference is measured by the $L_{\infty}-$ or $L_1$-norm).
The sensitivity analysis investigates the difference of optimal solutions of an integer linear program for changed $b$.  Their main result shows that the $L_{\infty}-$ distance to the nearest optimal solution to the corresponding integer program is at most the number of variables multiplied by the largest sub-determinant of $A$. Baldick~\cite{BALDICK1995}
strengthened some of the sensitivity and proximity results of Cook et al.~\cite{CookGST86} by introducing a finer measure for the constraint matrix $A$.

Murota~\cite{MurotaT04} and later Moriguchi et al.~\cite{MoriguchiST11}
derived  proximity results for the minimization of M-convex functions in integer variables
(see Murota~\cite{Murota:2003} for an introduction  to this concept).
Moriguchi et al.~\cite{MoriguchiST11} showed
for convex-separable objective that the 
difference between integral and fractional optimal solutions measured in the $L_1$-norm
is at most $2(n-1)$, where $n$ is the number of variables.

\paragraph{Congestion Games}
Rosenthal~\cite{Rosenthal73congestion} introduced congestion
games, a class of strategic games, where
a finite set of players competes over a finite set of
resources.  Each player is associated with a set of allowable subsets of resources, e.g., implicitly given by a combinatorial property. A pure strategy of a player is to choose a subset from this set. 
In the context of \emph{network games}, the resources correspond to the edges of a graph and the
allowable subsets correspond to the paths connecting a given source and a given sink.
The cost of a resource depends only on the number of
players choosing the same resource and each player strives to minimize
the sum of the costs of the resources contained in the selected subset.
In this general model, Rosenthal proved the existence of a pure Nash equilibrium by a potential function argument. Up to date congestion games
have been used as reference models for decentralized systems involving the selfish allocation of congestible resources (e.g., selfish
route choices in traffic networks~\cite{BeckmannMW56,Roughgarden05book,Wardrop52}
and flow control in telecommunication networks~\cite{JohariT06,KellyMT98,Srikant03})
and for decades they have been a focal point of research
in (algorithmic) game theory, operations research and theoretical computer science. 
 
In the past, the existence of pure Nash equilibria has been analyzed in many variants
of congestion games such as singleton congestion games with player-specific cost
functions (cf.~\cite{GairingMT11,IeongMNSS05,Milchtaich96,Milchtaich06}), congestion games with weighted players 
(cf.~\cite{AckermannRV09,AnshelevichDKTWR08,ChenR09,FotakisKS05,HarksK12}), nonatomic and atomic splittable congestion games 
(cf.~\cite{BeckmannMW56,HaurieM85,KellyMT98,Wardrop52})
and congestion games with player- and resource-specific and variable demands (cf.~\cite{HarksK16}).

Rosenthal proposed congestion games with \emph{integer-splittable demands} as an important
and meaningful model already back in 1973 in his first work on congestion games \cite{Rosenthal73integers}
even published prior to his more famous work~\cite{Rosenthal73congestion}. 
  Despite this long history, not much is known
regarding existence and computability of pure Nash equilibria.
Rosenthal gives an example that shows that pure Nash equilibria
need not exist for integer-splittable congestion games in general. Dunkel and Schulz~\cite{DunkelS08} strengthened this result showing that the existence of a pure Nash equilibrium in integer-splittable congestion games is NP-complete to decide. Meyers~\cite{Meyers08} proved that in games with linear cost functions, a pure Nash equilibrium is always guaranteed to exist. For singleton strategy spaces and nonnegative and convex cost functions, Tran-Thanh et al.~\cite{TranPCRJ11} showed the existence of pure Nash equilibria. They also showed that
pure Nash equilibria need not exist (even for the restricted strategy spaces)
if cost functions are semi-convex. Our results generalize the existence result of Tran-Thanh et al.\ towards general polymatroid strategy spaces.

Parts of the results of this paper have been presented by the authors in less general and preliminary form in the Proceedings of the 10th Conference on Web and Internet Economics~\cite{HarksKP14}.

\section{Preliminaries}
Let $\N$ denote the set of nonnegative integers and let $E$ be a finite and non-empty set of elements. We write $\N^E$ shorthand for $\N^{|E|}$. Throughout this paper, vectors $\vec x = (x_e)_{e \in E} \in \N^E$ will be denoted with bold face. An integral set function $f : 2^E \rightarrow \N$ is \emph{submodular} if $f(U) + f(V) \geq f(U \cup V) + f(U \cap V)$ for all $U,V \in 2^E$; $f$ is \emph{monotone} if
$f(U)\le f(V)$ for all $U\subseteq V \subseteq E$; and $f$ is \emph{normalized} if $f(\emptyset)=0$.
We call an integral, submodular, monotone, and normalized function $f : 2^E \rightarrow \N$ an \emph{integral polymatroid rank function}. The associated \emph{integral polyhedron} is defined as
\begin{align*}
\P_f &:= \Bigl\{\vec x \in \N^{E} \mid x(U) \leq f(U)\text{ for all } U\subseteq E \Bigr\}, 
\intertext{where for a vector $\vec x = (x_e)_{e \in E}$ and $U \subseteq E$, we write $x(U)$ shorthand for $\sum_{r \in U} x_e$. For an element $e \in E$, we write $x(e)$ instead of $x(\{e\})$. Given the integral polyhedron $\P_f$ and an integer $d \in \N$ with $d \leq f(E)$, the \emph{$d$-truncated integral polymatroid} $\P_f(d)$ is  defined as}
\P_f(d) &:= \Bigl\{\vec x\in \N^{E} \mid x(U) \leq f(U)\text{ for all }U\subseteq E,\, x(E) \leq d \Bigr\}.
\intertext{The $d$-truncated integral polymatroid $\P_f(d)$ is again an integral polyhedron as the corresponding integral polymatroid rank function $f' : 2^E \to \N$ defined as $f'(U) = \min\{d,f(U)\}$ is submodular. For a $d$-truncated integral polymatroid $\P_f(d)$, the corresponding \emph{integral polymatroid base polyhedron} is defined as}
\B_f(d) &:= \Bigl\{\vec x\in \N^{E} \mid x(U) \leq f(U)\text{ for all }U\subseteq E, x(E) = d \Bigr\}.
\end{align*}
The \emph{rank} of $\B_f(d)$ is given by $d$.
We consider the problem of minimizing a 
parametrized separable discrete convex function over a polymatroid base polytope. 
\begin{framed}
\begin{align}\label{opt-problem}  \text{minimize } &\sum_{e \in E} 
\tag{$P(\vec t,d)$} C_e(x_e;t_e)\\ \notag
\text{subject to: } &\vec x\in \B_f(d),
\end{align}
where $\vec t= (t_e)_{e \in E}\in\N^{E} , d\in\N$ are nonnegative integral parameters.
\end{framed}

Note that $\vec t$ influences the cost function while the parameter $d$
defines the truncation of the integral polymatroid base polytope. Let $\vec x^*(\vec t,d)\in \N^{E}$ be an optimal solution to \ref{opt-problem}. We are interested in quantifying the distance between $\vec x^*(\vec t,d)$ and a new optimal solution $\vec x^*(\vec t',d')$ for the new parameters $\vec t'\in \N$ and $d'\in \N$. We will measure the distance between solutions using the $L_1$-norm defined as $\norm{\vec x}:=\sum_{e \in E}|x_{e}|$ for all $\vec x \in \N^E$.
Thus, we are interested in bounding $\norm{\vec x^*(\vec t,d)-\vec x^*(\vec t',d')}$ in terms of  $\norm{\vec t-\vec t'}$ and $|d-d'|$. Naturally, non-trivial bounds are only possible when imposing additional assumptions on the dependence of the cost function on the parameter.

To state this assumption formally, we need the following notation of discrete derivatives. For a function $c : \N \to \R_+$ and $x \in \N$, let
\begin{align*}
c^-(x)&:= c(x)-c(x-1) & &\text{and} & c^+(x)&:=c(x+1)-c(x)
\intertext{denote the left and right discrete derivative, respectively. Note that the left derivative is only defined for $x \geq 1$. 
The function $c$ is called \emph{discrete convex,} if $c^-(x) \leq c^+(x)$ for all $x \in \N$.
For a function $C:\N\times\N\rightarrow \R_+$, we slightly abuse notation as we denote by}
C^-(x;t) &:=C(x;t)-C(x-1;t) & &\text{and} & C^+(x;t)&:=C(x+1;t)-C(x;t),
\end{align*}
the left and right derivatives, respectively, with respect to the first argument. 
We call $C$ discrete convex, if $x\mapsto C(x;t)$ is discrete convex
for all $t\in \N$.


We introduce the following notion of regularity that bounds the impact of a parameter change on the derivative.
\begin{definition}[Regularity]
A function $C: \N \times \N \to \R$ is called \emph{regular},
if \begin{align}
\label{reg:mon}C^-(x;t)&\leq  C^-(x;t+1) \text{ for all $x,t\in \N$,}\\
\label{reg:shift} C^-(x;t+1)&\leq  C^-(x+1;t) \text{ for all $x,t\in \N$.}
\end{align}
\end{definition}
In words,~\eqref{reg:mon}  requires that 
the (left) marginal cost function of $C$ is nondecreasing in $t$ and~\eqref{reg:shift} 
bounds the marginal cost of $C$
after adding one unit to parameter $t$
in terms of the marginal cost after adding one unit to $x$.
It can be shown that a regular function $C$ is discrete convex.
\begin{observation}
A regular function $C: \N \times \N \to \R$ is discrete convex.
\end{observation}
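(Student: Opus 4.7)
The plan is essentially a one-line chain of the two regularity inequalities, together with the identity that the right derivative at $x$ equals the left derivative at $x+1$. Let me spell out the structure I would present.

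First I would unpack the goal. By definition, discrete convexity of $x \mapsto C(x;t)$ says $C^-(x;t) \leq C^+(x;t)$ for every $x \in \N$ and every fixed $t \in \N$. I would immediately observe the identity
\begin{align*}
C^+(x;t) = C(x+1;t) - C(x;t) = C^-(x+1;t),
\end{align*}
so the claim reduces to showing $C^-(x;t) \leq C^-(x+1;t)$ for all $x, t \in \N$, i.e.\ monotonicity of the left discrete derivative in its first argument.

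Next I would invoke the two regularity conditions in sequence. Applying~\eqref{reg:mon} at the pair $(x,t)$ gives $C^-(x;t) \leq C^-(x;t+1)$, and then~\eqref{reg:shift} at the pair $(x,t)$ gives $C^-(x;t+1) \leq C^-(x+1;t)$. Chaining these and combining with the identity above yields
\begin{align*}
C^-(x;t) \leq C^-(x;t+1) \leq C^-(x+1;t) = C^+(x;t),
\end{align*}
which is the required inequality for every $x, t \in \N$. Since $t$ was arbitrary, $x \mapsto C(x;t)$ is discrete convex for every $t$, hence $C$ is discrete convex in the sense defined just above the observation.

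There is no real obstacle here: the two regularity axioms have been tailored precisely so that their composition delivers convexity in $x$. The only thing worth flagging is the innocuous but notationally useful rewrite $C^+(x;t)=C^-(x+1;t)$, which lets us match the form in which~\eqref{reg:shift} is stated. I would keep the proof to a few lines, since any additional verbiage would obscure that the statement is essentially immediate from the definitions.
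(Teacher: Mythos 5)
Your proof is correct and is essentially identical to the paper's: both chain \eqref{reg:mon} and \eqref{reg:shift} and conclude via the identity $C^-(x+1;t)=C^+(x;t)$.
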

\begin{proof}
We calculate
\[ C^-(x;t)\leq  C^-(x;t+1) \leq  C^-(x+1;t)=C^+(x;t),\]
where for the first and second inequality we used \eqref{reg:mon} and \eqref{reg:shift}, respectively.
\end{proof}

Throughout this work, we impose the following assumption on $C_e, e \in E$.
\begin{assumption}\label{ass:cost-general}
For every $e \in E$, $C_e$ is regular.
\end{assumption}

If \ref{opt-problem} only involves cost functions satisfying Assumption~\ref{ass:cost-general}, 
we speak of a convex and regular optimization problem.

We recapitulate the motivating example given in \S~\ref{subsec:sensitivity_analysis}
and state it as a convex and regular optimization problem.
\begin{example}
Let  $G=(V,E)$ be a connected undirected graph.
The objective is to compute $k\in \N$ spanning trees of $G$ with minimum cost so that along each spanning tree a message of unit size can be sent.  If $x_e\in \N$ messages are sent along edge $e$, the resulting cost per edge is defined as \[ C_e(x_e;u_e)=\begin{cases}\frac{1}{u_e-x_e}, &\text{ if } x_e<u_e,\\
+\infty, &\text{ else.}
\end{cases}\]
By defining $u = \max_{e \in E} u_e$ and $t_e=u-t_e$ for all $e\in E$, we
obtain an equivalent problem in which  the cost functions are of the form $C_e(x_e;t_e) = 1/(u-t_e-x_e)$. This problem involves regular cost functions, because  $\partial C_e(x_e;t_e)/\partial x_e = 1/(u-t_e-x_e)^2$ is increasing in $t_e$, thus,~\eqref{reg:mon} is satisfied. Moreover, one easiliy
verifies that also~\eqref{reg:shift} is satisfied.
\end{example}

\section{Sensitivity Results}
\label{sec:sensitivity}
For fixed parameters $\vec t \in \N^E, d \in \N$, we recapitulate the following necessary and sufficient optimality conditions for an optimal solution to problem~\ref{opt-problem}.
Let $\chi_e\in\N^{E}$ be the indicator vector with all-zero
entries except for the $e$-th coordinate which is $1$.
For $\vec x\in \B_f(d)$ and $e \in E$ denote by 
\begin{align*}
D_e(\vec x) &=\{g \in E\setminus\{e\}\;|\; \vec x+ \chi_{g}-\chi_e  \in \B_f(d)\}
\intertext{the set of \emph{feasible local exchanges} w.r.t.\  $\vec x$ and $e$ and by}
\Delta_e(\vec x; \vec t) &=
\begin{cases}
\min_{g \in D_e(\vec x)} C^+_g(x_g;t_g), & \text{ if $D_e(\vec x) \neq \emptyset$,}\\
+\infty, &\text{ else,}
\end{cases}
\end{align*}
the minimum alternative cost when exchanging $e$. The following theorem gives a necessary and sufficient condition for the optimality of a solution of a polymatroid optimization problem.

\begin{theorem}[Fujishige~\cite{fujishige2005submodular}]\label{thm:fuj}
$\vec x^*\in \B_f(d)$ is an optimal solution for~\ref{opt-problem} if and only if $C^-_e(x_e^*;t_e)\leq \Delta_e(\vec x^*; \vec t)$ for all $e \in E$.
\end{theorem}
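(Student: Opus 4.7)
The plan is to prove both directions separately. \emph{Necessity} is direct: if the inequality fails for some $e \in E$, then there exists $g \in D_e(\vec x^*)$ with $C^+_g(x_g^*;t_g) < C^-_e(x_e^*;t_e)$, and the vector $\vec x^* - \chi_e + \chi_g$ is feasible by the definition of $D_e(\vec x^*)$. Its objective value differs from that of $\vec x^*$ by exactly $C^+_g(x_g^*;t_g) - C^-_e(x_e^*;t_e) < 0$, contradicting optimality of $\vec x^*$.

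For \emph{sufficiency}, I would argue by contradiction: assume $\vec x^*$ satisfies the stated condition but is not optimal, and among all optima pick $\vec y \in \B_f(d)$ minimizing $\norm{\vec x^* - \vec y}$. Since $\vec x^* \neq \vec y$, there is some $e \in E$ with $x_e^* > y_e$ (in particular $x_e^* \geq 1$, so $C_e^-(x_e^*;t_e)$ is well defined). The key combinatorial tool is the \emph{simultaneous exchange property} for integral polymatroid base polytopes (Fujishige~\cite{fujishige2005submodular}): there exists $g \in E$ with $x_g^* < y_g$ such that both $\vec x^* - \chi_e + \chi_g \in \B_f(d)$ and $\vec y + \chi_e - \chi_g \in \B_f(d)$. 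The first membership says $g \in D_e(\vec x^*)$, so by hypothesis $C^-_e(x_e^*;t_e) \leq C^+_g(x_g^*;t_g)$.

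The next step uses convexity of each $C_e$ in its first argument (implied by regularity, by the Observation above) to transport this inequality to $\vec y$. Since $y_e \le x_e^* - 1$ and $y_g \ge x_g^* + 1$, convexity gives
\[
C^+_e(y_e;t_e) \;\leq\; C^-_e(x_e^*;t_e) \;\leq\; C^+_g(x_g^*;t_g) \;\leq\; C^-_g(y_g;t_g).
\]
Set $\vec y' := \vec y + \chi_e - \chi_g \in \B_f(d)$. Its cost differs from that of $\vec y$ by $C^+_e(y_e;t_e) - C^-_g(y_g;t_g) \leq 0$, hence $\vec y'$ is also optimal. But $\norm{\vec x^* - \vec y'} = \norm{\vec x^* - \vec y} - 2$, contradicting the minimality in the choice of $\vec y$.

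The only subtle point, and therefore the main obstacle, is selecting the right exchange axiom: picking an arbitrary $g \in D_e(\vec x^*)$ need not reduce the distance to $\vec y$, and picking any $g$ with $y_g > x_g^*$ need not give a feasible swap in $\B_f(d)$. The simultaneous exchange property for bases of an integral polymatroid is precisely what reconciles these two requirements, and from there the proof reduces to chaining the local optimality inequality at $\vec x^*$ with the two convexity inequalities that translate it into an inequality at $\vec y$.
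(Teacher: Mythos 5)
Your proof is correct; the paper itself states this result as a citation to Fujishige's book and gives no proof, and your argument is essentially the standard one found there. Both directions are sound: necessity is the immediate local-improvement computation, and sufficiency correctly combines the simultaneous exchange axiom for integral base polyhedra (note that $\B_f(d)$ is the integral base polyhedron of the submodular function $U \mapsto \min\{d, f(U)\}$, so the axiom applies) with the monotonicity of first differences coming from discrete convexity to produce an optimal solution strictly closer to $\vec x^*$ in $L_1$-norm. The one point worth making explicit when writing this up is the precise citation for the \emph{simultaneous} (two-sided) exchange property, since the one-sided exchange alone would not guarantee $\vec y + \chi_e - \chi_g \in \B_f(d)$ and hence would not let you decrease $\norm{\vec x^* - \vec y}$.
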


Using these conditions, we
proceed to establish the first sensitivity result which relates optimal
solutions for changed values of $\vec t$. 

\begin{theorem}\label{thm:shift}
Let $P(\vec t, d)$ be a regular convex optimization problem with optimal solution
$\vec x^*(\vec t,d)$ and let 
$\vec t'=\vec t + \chi_{e^*}$ for some $e^* \in E$. Let
\begin{align*}
g^*\in
\begin{cases}
	\arg\min\nolimits_{g \in D_{e^*}(\vec x^*(\vec t,d))}  \{C_g^+(x_g; t_g)\}, &\text{if $D_{e^*}(\vec x^*(\vec t,d)) \neq \emptyset$,}\\
	\{e^*\},  &\text{else.}
\end{cases}
\end{align*}
Then the better of the two solutions,
$\vec x^*(\vec t, d)$ and $\vec x^*(\vec t, d) -\chi_{e^*} +\chi_{g^*}$, 
is optimal for $P(\vec t', d)$.
\end{theorem}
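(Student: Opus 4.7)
The plan is to invoke Theorem~\ref{thm:fuj}: it suffices to exhibit one of the two candidate solutions that satisfies Fujishige's necessary and sufficient optimality conditions for $P(\vec{t}', d)$. Write $\vec{x} := \vec{x}^*(\vec{t}, d)$ and $\vec{y} := \vec{x} + \chi_{g^*} - \chi_{e^*}$, noting that $\vec{y} = \vec{x}$ precisely when $D_{e^*}(\vec{x}) = \emptyset$. As a first reduction I would verify that for every $e \neq e^*$ the optimality condition at $e$ for $\vec{x}$ survives the parameter change: since $t'_e = t_e$, we have $C_e^-(x_e; t'_e) = C_e^-(x_e; t_e) \leq \Delta_e(\vec{x}; \vec{t})$ by the optimality of $\vec{x}$ under $\vec{t}$, while regularity \eqref{reg:mon} applied to the unique summand whose parameter changes (namely $g = e^*$, when it lies in $D_e(\vec{x})$) gives $\Delta_e(\vec{x}; \vec{t}') \geq \Delta_e(\vec{x}; \vec{t})$. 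Hence only the condition at $e^*$ requires further attention.

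I would then split into two cases depending on whether $C_{e^*}^-(x_{e^*}; t_{e^*} + 1) \leq \Delta_{e^*}(\vec{x}; \vec{t}')$ holds. If it does, $\vec{x}$ satisfies all Fujishige conditions for $P(\vec{t}', d)$ and is optimal (this subsumes the trivial sub-case $D_{e^*}(\vec{x}) = \emptyset$). If it does not, then $\Delta_{e^*}(\vec{x}; \vec{t}') < +\infty$, so $D_{e^*}(\vec{x}) \neq \emptyset$, $g^* \neq e^*$, and $\vec{y} \in \B_f(d)$; I would then show $\vec{y}$ is optimal for $P(\vec{t}', d)$ by verifying the Fujishige condition at each $e \in E$. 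The centerpiece is the condition at $e = e^*$, where regularity \eqref{reg:shift} gives
\[
C_{e^*}^-(y_{e^*}; t'_{e^*}) = C_{e^*}^-(x_{e^*}-1; t_{e^*}+1) \leq C_{e^*}^-(x_{e^*}; t_{e^*}) \leq \Delta_{e^*}(\vec{x}; \vec{t}).
\]
At $e = g^*$, convexity rewrites the left derivative at $y_{g^*} = x_{g^*}+1$ as the right derivative at $x_{g^*}$, which by the defining property of $g^*$ equals $\Delta_{e^*}(\vec{x}; \vec{t}')$. At $e \notin \{e^*, g^*\}$, $y_e = x_e$ and one uses optimality of $\vec{x}$ at $e$ under $\vec{t}$ together with convexity at $g^*$ (whose load increased).

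The main obstacle is relating $\Delta_e(\vec{y}; \vec{t}')$ to $\Delta_e(\vec{x}; \vec{t})$ in the second case: the feasible-exchange sets $D_e(\vec{y})$ and $D_e(\vec{x})$ can differ after the swap, so comparisons do not transfer term by term. I expect to resolve this via the symmetric exchange axiom for integral polymatroid base polytopes, which lets feasible single-unit moves at $\vec{y}$ be traced back to feasible moves at $\vec{x}$ (possibly routed through $g^*$). Combined with the regularity identity $C^-(x; t+1) \leq C^-(x+1; t) = C^+(x; t)$, this converts the optimality inequalities at $(\vec{x}, \vec{t})$ into the ones needed at $(\vec{y}, \vec{t}')$.
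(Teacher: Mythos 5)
Your proposal follows the paper's proof essentially verbatim: the same two candidate solutions, the same case split on whether $C_{e^*}^-(x_{e^*};t_{e^*}+1)$ exceeds $\Delta_{e^*}(\vec x;\vec t)$, the same reduction showing only the condition at $e^*$ is at risk for $\vec x$ (via \eqref{reg:mon}), and the same verification of the Fujishige conditions at $e^*$, $g^*$, and the remaining elements using \eqref{reg:shift} and discrete convexity. The ``main obstacle'' you flag is precisely where the paper's technical work lies --- it establishes $D_{e^*}(\vec y)\subseteq D_{e^*}(\vec x)$ and, for $e\notin\{e^*,g^*\}$ and $g\in D_e(\vec y)\setminus\{e^*\}$, the double-swap decomposition $\vec x-\chi_e+\chi_{g^*}\in\B_f(d)$ and $\vec x-\chi_{e^*}+\chi_g\in\B_f(d)$ by uncrossing tight sets via submodularity, which is exactly the exchange argument (``routed through $g^*$'') that you anticipate.
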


\begin{proof}
When $D_{e^*}(\vec x^*(\vec t, d)) = \emptyset$, then $\vec x^*(\vec t, d)$ trivially satisfies the optimality conditions of Theorem~\ref{thm:fuj} for any parameter vector $\vec t'$ and there is nothing left to show. Thus, let us assume $D_{e^*}(\vec x) \neq \emptyset$. 
Let $\vec x=\vec x^*(\vec t,d)$ and $\vec y = \vec x - \chi_{e^*} + \chi_{g^*} \neq \vec x$.


First, consider the case $C_{e^*}^-(x_{e^*}; t_{e^*}+1)\leq \Delta_{e^*}(\vec x; \vec t)$. We claim that then $\vec x$ is still an optimal solution to $P(\vec t',d)$ as it
satisfies the optimality conditions
\begin{align}
\label{eq:optimality_thm32}
C_e^-(x_e; t_e') \le \Delta_e(\vec x; \vec t') \quad \text{ for all $e \in E$.}
\end{align}
To see this,  note that by Assumption~\ref{ass:cost-general} (using~\eqref{reg:mon}) we get $\Delta_e(\vec x; \vec t') \geq \Delta_e(\vec x; \vec t)$ for all $e \in E$ as $t_e' \geq t_e$. This directly implies that \eqref{eq:optimality_thm32} is satisfied for all $e \neq e^*$. To see the inequality also for $e^*$, observe that
\begin{align*}
C_{e^*}^-(x_{e^*};t_{e^*}') = C_{e^*}^-(x_{e^*}; t_{e^*}+1)\leq \Delta_{e^*}(\vec x; \vec t) =\Delta_{e^*}(\vec x; \vec t'),
\end{align*}
where for the last equality we used that $C_{g}^+(x_g;t_g) = C_{g}^+(x_g;t_g')$ for all $g \neq e^*$.

Second, consider the case $C^-_{e^*}(x_{e^*};t_{e^*}+1) > \Delta_{e^*}(\vec x; \vec t)$.  
We proceed to show that then $\vec y$ is optimal for $P(\vec t', d)$ by checking the optimality conditions of Theorem~\ref{thm:fuj} for all $e \in E=\{e^*\}\cup \{g^*\}\cup E\setminus\{e^*,g^*\}$.

{\bf Case ({\bf{A}}):} $e=e^*$. We obtain
\[ C_{e^*}^-(y_{e^*}; t_{e^*}+1)=C_{e^*}^-(x_{e^*}-1; t_{e^*}+1)\leq C_{e^*}^-(x_{e^*}; t_{e^*})\leq \Delta_{e^*}(\vec x; \vec t),\]
where the first inequality follows by the regularity of $C_{e^*}$ and the second since $\vec x$
is optimal for $P(\vec t, d)$. Thus, the optimality conditions for $e^*$ are satisfied if $\Delta_{e^*}(\vec x; \vec t) \leq \Delta_{e^*}(\vec y; \vec t')$. To prove the latter inequality, we first show that $D_{e^*}(\vec y) \subseteq D_{e^*}(\vec x)$. Assume by contradiction that there is  $g\in D_{e^*}(\vec y)\setminus D_{e^*}(\vec x)$.
This implies 
$ \vec x -\chi_{e^*}+\chi_{g}\notin \B_f(d).$
Hence, there must exist $T\subset E$ with $g\in T, e^*\notin T$
and $f(T)=x(T)$. On the other hand, $g\in D_{e^*}(\vec y)$ implies
$\vec y':=\vec y-\chi_{e^*}+\chi_{g}= \vec x -2\chi_{e^*}+\chi_{g}+\chi_{g^*}\in \B_f(d)$.
Using $g\in T, e^*\notin T, x(T)=f(T)$, this implies $y'(T)\geq x(T)+1= f(T)+1$, hence, $\vec y'\notin \B_f(d)$,
a contradiction.

Finally, let us show $\Delta_{e^*}(\vec x; \vec t) \leq \Delta_{e^*}(\vec y; \vec t')$. This is trivial if $D_{e^*}(\vec y) = \emptyset$. Otherwise, we obtain 
\begin{align*}
\Delta_{e^*}(\vec x; \vec t) &= \min_{g \in D_{e^*}(\vec x)} \{C^+_g(x_g; t_g)\} = C^+_{g^*}(x_{g^*};t_{g^*}) \leq C^+_{g^*}(y_{g^*};t_{g^*}) = C^+_{g^*}(y_{g^*};t'_{g^*})
\intertext{where for the inequality we used discrete convexity. Moreover, we obtain for all $g' \in D_{e^*}(\vec x) \setminus \{g^*\}$ the inequality}
\Delta_{e^*}(\vec x; \vec t) &= \min_{g \in D_{e^*}(\vec x)} \{C^+_g(x_{g}; t_{g})\} \leq C^+_{g'}(x_{g'}; t_{g'}) = C^+_{g'}(y_{g'};t'_{g'}).
\end{align*}
Putting things together, we get
\[\Delta_{e^*}(\vec x; \vec t)\leq \min_{g \in D_{e^*}(\vec x)} C^+_g(y_g; t'_g)\leq  \min_{g \in D_{e^*}(\vec y)} C^+_g(y_g; t'_g) \leq  \Delta_{e^*}(\vec y; \vec t') .\]

{\bf Case ({\bf{B}}):} $e=g^*$.
For a contradiction, assume that there is  $g\in D_{g^*}(\vec y)$
with 
\begin{equation}\label{contra1}C_{g^*}^-(y_{g^*}; t_{g^*}')>C_{g}^{+}(y_{g}; t_{g}').
\end{equation}
Note that $g\neq e^*$ since
\begin{multline*}
	C_{g^*}^-(y_{g^*}; t'_{g^*}) = C^-_{g^*}(y_{g^*};t_{g^*}) = C_{g^*}^+(x_{g^*}; t_{g^*})\\
<C_{e^*}^{-}(x_{e^*}; t_{e^*}+1)=C_{e^*}^{+}(y_{e^*}; t_{e^*}+1) = C_{e^*}^+(y_{e^*};t_{e^*}'),
\end{multline*}
a contradiction to~\eqref{contra1}.
Thus,  
$g\in D_{g^*}(\vec y) \setminus \{e^*\}$. We obtain
\[ \vec y-\chi_{g^*}+\chi_{g}= \vec x -\chi_{e^*}+\chi_{g} \in \B_f(d)\]
which implies $g\in D_{e^*}(\vec x)$.
Since $g^*$  minimizes $C_{g}^{+}(\vec x; \vec t)$ among all $g\in D_{e^*}(\vec x)$, we get
$C_{g^*}^-(y_{g^*}; t'_{g^*}) = C_{g^*}^-(y_{g^*}; t_{g^*})=C_{g^*}^+(x_{g^*}; t_{g^*})\leq C_{g}^{+}(x_{g}; t_{g}) = C_{g}^{+}(y_{g}; t_{g}')$
contradicting~\eqref{contra1}.

{\bf Case ({\bf{C}}):} $e \in E\setminus \{e^*,g^*\}$.
Assume by contradiction that there is $g \in D_{e}(\vec y)$
with 
\begin{equation}\label{eq:con} C_{e}^-(y_{e}; t'_{e})>C_{g}^{+}(y_{g}; t'_{g}).
\end{equation}
We first treat the case $g=e^*$ where \eqref{eq:con} becomes
\begin{equation}\label{contra2}
C_{e}^-(y_{e}; t_{e}) > C_{e^*}^{+}(y_{e^*}; t_{e^*}+1).
\end{equation}
With $e^*\in D_{e}(\vec y)$, we get $\vec y-\chi_e+\chi_{e^*}= \vec x -\chi_e+\chi_{g^*}\in \B_f(d)$,
hence, $g^*\in D_e(\vec x)$.
We get
\begin{align*} C_{e^*}^{+}(y_{e^*}; t_{e^*}+1)&<C_{e}^-(y_{e}; t_{e}) \tag{using \eqref{contra2}}\\
&=C_{e}^-(x_{e}; t_{e}) \tag{using  $e \in E\setminus \{e^*,g^*\}$} \\
& \leq C_{g^*}^{+}(x_{g^*}; t_{g^*}) \tag{since $\vec x$ was optimal for $\vec t$}\\
&<C_{e^*}^{-}(x_{e^*}; t_{e^*}+1)\tag{since $\vec x$ was not optimal for $\vec t'$} \\
&=C_{e^*}^{+}(y_{e^*}; t_{e^*}+1),\tag{since $\vec y=\vec x-\chi_{e^*}+\chi_{g^*}$}
\end{align*}
a contradiction.

From now on we may assume $g \in E \setminus \{e^*\}$.
We claim that the following two properties are satisfied:
\begin{enumerate}[label=(\alph*)]
\item\label{it:condition_a_in32} $\vec x -\chi_{e}+\chi_{g^*} \in \B_f(d),$
\item\label{it:condition_b_in32} $\vec x -\chi_{e^*}+\chi_{g} \in \B_f(d)$.
\end{enumerate}
This claim has also been used in the proof of Theorem 4.11 in Fujishige~\cite{fujishige2005submodular}.
In order to keep our analysis self-contained, we provide an alternative proof of fact \ref{it:condition_a_in32} and~\ref{it:condition_b_in32} below.

Before proving these properties, however, we show that they give the desired contradiction to \eqref{eq:con}. First note that \ref{it:condition_a_in32} implies $g^* \in D_{e}(\vec x)$ which together with the optimality of $\vec x$ for $\vec t$ implies $C_e^-(y_e;t_e') = C^-_e(x_e;t_e) \leq C_{g^*}^+(x_{g^*}; t_{g^*})$. Similarly,  \ref{it:condition_b_in32} implies $g \in D_{e^*}(\vec x)$ which, by the choice of $g^*$, implies $C^+_{g^*}(x_{g^*};t_{g^*}) \leq C^+_{g}(x_g;t_g)$. Finally, we have $C^+_{g}(x_g;t_g) \leq C^+_{g}(y_{g}; t_{g}) = C^+_{g}(y_g; t'_g)$ by discrete convexity and the fact that $y_g \geq x_g$ for all $g \in E \setminus \{e^*\}$. Combining all three inequalities, we obtain $C_e^-(y_e;t_e') \leq C_g^+(y_g;t_g')$, a contradiction to \eqref{eq:con}.

It remains to prove properties \ref{it:condition_a_in32} and \ref{it:condition_b_in32}.
Denote
\[ \vec y':=\vec y-\chi_e +\chi_g= \vec x-\chi_{g^*}+\chi_{e^*}-\chi_e +\chi_g \in \mathbb{B}_f(d).\] 

We will first show that $\vec x -\chi_{e}+\chi_{g} \notin \B_f(d)$.
Assume by  contradiction that
$\vec x -\chi_{e}+\chi_{g} \in \B_f(d)$.
We then obtain
\begin{align}
C_e^-(x_e;t_e) = C_e^-(y_e;t_e') >C_g^+(y_g;t'_g) = C_g^+(y_g;t_g) \geq C_g^+(x_g;t_g),	
\end{align}
where for the first identity we used $e \in E \setminus\{e^*,g^*\}$, the first inequality used \eqref{eq:con}, the second identity used $g \neq e^*$ and the last inequality used $g \neq e^*$, discrete convexity and $y_g \geq x_g$ for all $g \in E \setminus \{e^*\}$. This implies that 
$\vec x$ was not optimal for $P(\vec t, d)$, a contradiction.

We conclude that $\vec x - \chi_e + \chi_g \notin \B_f(d)$, but $\vec y' = \vec x - \chi_e + \chi_g - \chi_{g^*} + \chi_{e^*} \in \B_f(d)$. Thus, there exists some set $S\subset E$ with
$x(S)=f(S)$ and 
\begin{equation}\label{eq6}
\{g,g^*\}\subseteq S \quad \mbox{ and} \quad \{e,e^*\}\cap S=\emptyset.
\end{equation}
It follows that $x(S)= y'(S)=f(S)$.

We proceed to show \ref{it:condition_a_in32}. For the sake of a contradiction, suppose that
$\vec x-\chi_e + \chi_{g^*}\not\in \mathbb{B}_f(d)$. Using $g \in D_e(\vec y)$ and, thus, $x_e = y_e \geq 1$, this implies the existence of a set $T\subset E$ with $x(T)=f(T)$, $g^*\in T$, and $e\not\in T$. Since $\vec y=\vec x-\chi_{e^*} + \chi_{g^*}\in \mathbb{B}_f(d)$, it follows that $e^*\in T$.
Moreover, since $\vec y'= \vec y-\chi_e +\chi_g \in \mathbb{B}_f(d)$ and $e\not\in T$, we have $g\not\in T$. Hence, $ y'(T)=x(T)=f(T)$.
By submodularity of~$f$, the identities $x(S)= y'(S)=f(S)$ and $x(T)=y'(T)=f(T)$ imply $x(S\cap T)=y'(S\cap T)=f(S\cap T)$ and $x(S\cup T)= y'(S\cup T)=f(S\cup T)$.
Together with $g^*\in S\cap T$ and $\{e,e^*,g\}\cap (S\cap T)= \emptyset$, we arrive at the desired contradiction
\begin{equation}
f(S\cap T)=x(S\cap T) = y'(S\cap T)-1= f(S\cap T)-1.
\end{equation}
Thus, $x-\chi_e+ \chi_{g^*}\in \mathbb{B}_f(d)$, as claimed.

In a similar way, we show \ref{it:condition_b_in32}.
For the sake of a contradiction, suppose  $\vec x-\chi_{e^*}+ \chi_{g}\not\in \mathbb{B}_f(d)$.
Since $D_{e^*}(\vec x)\neq\emptyset$ and, thus, $x_{e^*} \geq 1$, this implies the existence of a set
$U \subset E$ with $x(U)=f(U)$, $g\in U$, and $e^*\not\in U$.
Since $\vec y'=x-\chi_{e^*}+\chi_{g^*}-\chi_e+\chi_g \in \mathbb{B}_f(d)$, we further have $e \in U$, and $g^*\not\in U$, implying $y'(U)=f(U)$ and $g\in S\cap U$.
Just as for \ref{it:condition_a_in32}, $x(S)=y'(S)=f(S)$ and $x(U)=y'(U)=f(U)$ leads to the desired contradiction
$$f(U\cap S)=x(U\cap S)= y'(U\cap S)-1 =f(U\cap S)-1.$$
We have treated all cases and the theorem follows.
%
\end{proof}

In a similar manner, it can be shown that at most a single local improvement step suffices to
obtain a new optimal solution for any parameter shift of type $\vec t'=\vec t-\chi_{e^*}$.
We conclude with the following corollary.

\begin{corollary}
\label{cor:shift}
Let $P(\vec t, d)$ be regular convex optimization problem.
Then, for every optimal solution $\vec x^*(\vec t,d)$ to $P(\vec t, d)$
and every $\vec t'$ with $\norm{\vec t -\vec t'}=1$, there is an optimal solution
$\vec x^*(\vec t',d)$ for $P(\vec t', d)$ satisfying:
\begin{align}\label{sens:lem}\norm{\vec x^*(\vec t,d)-\vec x^*(\vec t',d)}\leq 2.
\end{align}
\end{corollary}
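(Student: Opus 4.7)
The plan is to split on the direction of the parameter change. Since $\norm{\vec t-\vec t'}=1$, there exists $e^*\in E$ such that either $\vec t'=\vec t+\chi_{e^*}$ or $\vec t'=\vec t-\chi_{e^*}$, and I would treat the two cases separately.

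In the first case, I would invoke Theorem~\ref{thm:shift} directly. That theorem produces an optimal solution for $P(\vec t',d)$ equal to either $\vec x^*(\vec t,d)$ or $\vec x^*(\vec t,d)-\chi_{e^*}+\chi_{g^*}$ (with $g^*=e^*$ in the degenerate case $D_{e^*}(\vec x^*(\vec t,d))=\emptyset$). The first candidate has $L_1$-distance $0$ from $\vec x^*(\vec t,d)$ and the second has distance exactly $2$, so~\eqref{sens:lem} is immediate.

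For the opposite case $\vec t'=\vec t-\chi_{e^*}$, I would establish the symmetric counterpart of Theorem~\ref{thm:shift} that is alluded to in the sentence preceding the corollary. Writing $\vec x=\vec x^*(\vec t,d)$, I would define the set of ``reverse'' local exchanges
\[ D'_{e^*}(\vec x):=\bigl\{g\in E\setminus\{e^*\}\mid \vec x+\chi_{e^*}-\chi_g\in \B_f(d)\bigr\}, \]
i.e., the swaps that move a unit \emph{into} $e^*$, and let $g^*\in\arg\max\nolimits_{g\in D'_{e^*}(\vec x)} C_g^-(x_g;t_g)$ when this set is nonempty, or $g^*:=e^*$ otherwise. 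I would then claim that the better of $\vec x$ and $\vec x+\chi_{e^*}-\chi_{g^*}$ is optimal for $P(\vec t',d)$, which again yields $L_1$-distance $0$ or $2$.

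The proof of this claim mirrors Theorem~\ref{thm:shift}. By~\eqref{reg:mon}, decreasing $t_{e^*}$ by one unit cannot raise any $C_e^-$, so the only way the optimality conditions of Theorem~\ref{thm:fuj} at $\vec t'$ can fail for $\vec x$ is through a drop in $\Delta_e(\vec x;\vec t')$ at some $e\in D'_{e^*}(\vec x)$ caused by the decrease of $C_{e^*}^+(x_{e^*};\,\cdot\,)$; the element $g^*$ is precisely the worst such violator. If no violation occurs, $\vec x$ itself is optimal for $P(\vec t',d)$. Otherwise, I would set $\vec y:=\vec x+\chi_{e^*}-\chi_{g^*}$ and verify the optimality conditions at $\vec y$ in the three cases $e=e^*$, $e=g^*$, and $e\in E\setminus\{e^*,g^*\}$, in analogy with Cases (A)--(C) of the proof of Theorem~\ref{thm:shift} but with the roles of the ``removed'' and ``added'' elements interchanged. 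The main obstacle is the analog of Case (C): for a putative violator $g\in D_e(\vec y)$ with $e\notin\{e^*,g^*\}$, one must establish the feasibility facts $\vec x-\chi_{g^*}+\chi_g\in\B_f(d)$ and $\vec x+\chi_{e^*}-\chi_g\in\B_f(d)$. These follow from the submodularity of $f$ applied to the intersection of two tight sets, exactly as properties~(a) and~(b) were derived in the proof of Theorem~\ref{thm:shift}; the regularity condition~\eqref{reg:shift} is invoked in the reindexed form $C_{g^*}^-(x_{g^*};t_{g^*})\le C_{g^*}^-(x_{g^*}+1;t_{g^*}-1)$ to transport the relevant marginal-cost inequality across the swap. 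Once this symmetric theorem is established, combining both sign cases delivers the corollary.
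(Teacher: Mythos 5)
Your proposal is correct and follows essentially the same route as the paper: the increase case is read off directly from Theorem~\ref{thm:shift}, and the decrease case $\vec t'=\vec t-\chi_{e^*}$ is handled by the symmetric single-exchange argument that the paper itself only asserts ``can be shown in a similar manner'' before stating the corollary. Your sketch of that symmetric case (reverse exchanges into $e^*$, the worst violator $g^*$, and the mirrored Cases (A)--(C) with the submodularity argument for the two feasibility facts) is the intended argument, and in fact supplies more detail than the paper does.
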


We now turn to the impact
of a change of the parameter $d$ on optimal solutions of $P(\vec t, d)$.
\begin{theorem}\label{thm:dem}
Let $D\subset \N$  denote the set of possible integral values of $d$.
Let $P(\vec t, d)$ be a family of regular optimization problems with $\B_f(d)\neq \emptyset$
for all $d\in D$.

Then, for every $ d,d'\in D$ with $|d - d'|=1$ and every
optimal solution $\vec x^*(\vec t,d)$ to $P(\vec t, d)$, there is an optimal solution
$\vec x^*(\vec t,d')$ for $P(\vec t, d')$ with
\begin{align}\label{sens:demand}\norm{\vec x^*(\vec t,d)-\vec x^*(\vec t,d')}\leq  |d-d'|=1.
\end{align}
\end{theorem}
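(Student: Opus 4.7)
My plan is to treat $d'=d+1$ and $d'=d-1$ separately, and in each case to construct the new optimum $\vec y$ from $\vec x:=\vec x^*(\vec t,d)$ by a single elementary move—adding or removing one unit at a carefully chosen element $e^*$—and then verify optimality of $\vec y$ through the conditions of Theorem~\ref{thm:fuj}. The choice of $e^*$ will be greedy in both directions: the element with the cheapest marginal cost to add when $d$ grows, or the one with the most expensive marginal cost to drop when $d$ shrinks.

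For $d'=d+1$, the assumption $\B_f(d+1)\neq\emptyset$ gives $d<f(E)$, so the maximal set $T$ with $x(T)=f(T)$ (well defined since tight sets are closed under union by submodularity) is a proper subset of $E$, and every $e\in E\setminus T$ satisfies $\vec x+\chi_e\in\B_f(d+1)$. I pick $e^*$ minimizing $C_e^+(x_e;t_e)$ over such $e$ and set $\vec y=\vec x+\chi_{e^*}$. To check $C_e^-(y_e;t_e)\le\Delta_e(\vec y;\vec t)$ I split on $e$: for $e=e^*$, every $g\in D_{e^*}(\vec y)$ satisfies $\vec x+\chi_g\in\B_f(d+1)$, hence is eligible in the minimization defining $e^*$, which yields the bound; for $e\neq e^*$, I show that for every $g\in D_e(\vec y)$ the vector $\vec x-\chi_e+\chi_g$ lies in $\B_f(d)$ (obtained by subtracting $\chi_{e^*}$ from the defining inclusion $\vec y-\chi_e+\chi_g\in\B_f(d+1)$), so $g\in D_e(\vec x)$ and optimality of $\vec x$ combined with discrete convexity delivers the required inequality.

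For $d'=d-1$, I pick $e^*\in\arg\max\{C_e^-(x_e;t_e)\mid x_e\ge 1\}$ and set $\vec y=\vec x-\chi_{e^*}$; feasibility $\vec y\in\B_f(d-1)$ is automatic since reducing a coordinate preserves every polymatroid inequality. The novel subtlety arises in the subcase $e\neq e^*$ and $g\in D_e(\vec y)\setminus\{e^*\}$, where I want to conclude either $g\in D_e(\vec x)$—which by optimality of $\vec x$ yields $C_e^-(x_e;t_e)\le C_g^+(x_g;t_g)$ directly—or $g\in D_{e^*}(\vec x)$—which yields $C_{e^*}^-(x_{e^*};t_{e^*})\le C_g^+(x_g;t_g)$ and, combined with the maximizing choice of $e^*$, gives the same bound. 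The main obstacle is ruling out simultaneous failure of both inclusions: failure of the first produces a tight set $V\ni g,e^*$ with $e\notin V$, failure of the second produces a tight set $U\ni g,e$ with $e^*\notin U$, and submodularity then forces $V\cap U$ to be tight with $g\in V\cap U$ and $\{e,e^*\}\cap(V\cap U)=\emptyset$, directly contradicting the assumed feasibility $\vec y-\chi_e+\chi_g=\vec x-\chi_{e^*}-\chi_e+\chi_g\in\B_f(d-1)$. This submodular-intersection argument is the technical heart of the proof; the remaining subcases ($e=e^*$ or $g=e^*$) are straightforward adaptations of the $d+1$ analysis that combine discrete convexity with the extremal choice of $e^*$.
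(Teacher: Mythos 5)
Your proposal is correct and the $d'=d+1$ direction follows essentially the same route as the paper: add one unit greedily at the cheapest element with slack and verify the exchange-optimality conditions of Theorem~\ref{thm:fuj} by showing the relevant local exchanges for $\vec y$ were already available for $\vec x$. The paper dismisses the $d'=d-1$ direction as ``similar,'' whereas you supply it in full; your submodular uncrossing of the two tight sets $U$ and $V$ to rule out the simultaneous failure of $g\in D_e(\vec x)$ and $g\in D_{e^*}(\vec x)$ is sound and mirrors the uncrossing argument the paper uses in the proof of Theorem~\ref{thm:shift}.
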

\begin{proof}
We here only prove the case $d'=d+1$, the case $d'=d-1$ follows similarly.
Define the set of resources with \emph{slack} with respect to $\vec x:= \vec x^*(\vec t,d)$
as
\[ \mathcal{S}(\vec x):=\{e \in E \mid \vec x +\chi_e \in \B_f(d+1)\}.\] 
Since by assumption $d+1\in D$ and thus $\B_f(d+1)\neq \emptyset$, we obtain $\mathcal{S}(\vec x)\neq \emptyset$
(cf. Fujishige~\cite[Theorem~2.3, pp. 35]{fujishige2005submodular}). 

We claim that the solution $\vec y := \vec x +\chi_{g^*}$ is optimal for problem $P(\vec t, d')$,
where 
\[ g^*\in \arg\min\nolimits_{g \in \mathcal{S}(\vec x)} C^+_g(x_g;t_g).\]
To prove the claim, we show that the optimality conditions of Theorem~\ref{thm:fuj} are satisfied, i.e., $C_e^-(y_e;t_e) \leq \Delta_e(\vec y; \vec t)$ for all $e \in E$. This is trivial if $D_e(\vec y) = \emptyset$. Otherwise, we distinguish two cases. If $g^* \in \arg\min_{g \in D_e(\vec y)} \{C^+_g(y_g;t_g)\}$, we obtain,
\begin{align*}
\Delta_e(\vec y; \vec t) &= C^+_{g^*}(y_{g^*};t_{g^*}) \geq C_{g^*}^+(x_{g^*};t_{g^*}) \geq C_e^-(x_e;t_e) = C_e^-(y_e;t_e)
\intertext{where for the first inequality, we used discrete convexity and for the second inequality, we used the optimality of $\vec x$ for $\B_f(d)$. If $g^* \notin \arg\min_{g \in D_e(\vec y)} \{C^+_g(y_g;t_g)\}$, we obtain}
\Delta_e(\vec y;\vec t) &= \min_{g \in D_e(\vec y)} \{C_g^+(y_g;t_g)\}	 \geq \min_{g \in D_e(\vec x)} \{C_g^+(y_g;t_g)\} = \min_{g \in D_e(\vec x)}\{C^+_g(x_g;t_g)\}\\
 &\quad\geq C^-_e(x_e;t_e) = C^-_e(y_e;t_e),	
\end{align*}
where for the first inequality we used $D_e(\vec x)\supseteq D_e(\vec y)$ for all
$e \in E\setminus \{g^*\}$ and for the second inequality we used the optimality of $\vec x$ for $\B_f(d)$.
\end{proof}

By inductively applying Theorem~\ref{thm:shift} and Theorem~\ref{thm:dem} we arrive at the following result. 
\begin{theorem}\label{thm:gen}
Let $D\subset \N$  denote the set of possible integral values of $d$ and
let $P(\vec t, d), d\in D$ be a set of regular optimization problem with 
$\B_f(d)\neq \emptyset$ for all $d\in D$.
Then, for every optimal solution $\vec x^*(\vec t,d)$ of $P(\vec t, d)$, $d\in D$, every $ d'\in D$  and every $\vec t'$, there is an optimal solution
$\vec x^*(\vec t',d')$ of $P(\vec t', d')$ with
\begin{align}\label{sens:full}\norm{\vec x^*(\vec t,d)-\vec x^*(\vec t',d')}\leq  2\norm{\vec t - \vec t'}+|d-d'|.
\end{align}
\end{theorem}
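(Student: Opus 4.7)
The plan is a straightforward induction on $k := \norm{\vec t-\vec t'}+|d-d'|$, combining Corollary~\ref{cor:shift} (one-unit shift in $\vec t$) and Theorem~\ref{thm:dem} (one-unit shift in $d$) via the triangle inequality for the $L_1$-norm. The base case $k=0$ is trivial since one may take $\vec x^*(\vec t',d')=\vec x^*(\vec t,d)$.

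For the inductive step, I would first change $d$ into $d'$ by $|d-d'|$ consecutive unit steps, and then change $\vec t$ into $\vec t'$ by $\norm{\vec t-\vec t'}$ consecutive unit coordinate shifts (the order is not important, but this ordering is convenient). More precisely, pick a sequence $d=d_0,d_1,\dots,d_{|d-d'|}=d'$ of integers with $|d_{i+1}-d_i|=1$ that moves monotonically from $d$ to $d'$, and then a sequence $\vec t=\vec t_0,\vec t_1,\dots,\vec t_{\norm{\vec t-\vec t'}}=\vec t'$ with $\norm{\vec t_{i+1}-\vec t_i}=1$ that changes one coordinate at a time. At each step, I apply Theorem~\ref{thm:dem} or Corollary~\ref{cor:shift} to the previously obtained optimal solution to produce an optimal solution for the next parameter pair; since each individual step bounds the $L_1$-distance to the previous solution by $1$ or by $2$ respectively, the triangle inequality for the $L_1$-norm telescopes to give
\begin{align*}
\norm{\vec x^*(\vec t,d)-\vec x^*(\vec t',d')} \;\le\; \sum_{i=0}^{|d-d'|-1} 1 \;+\; \sum_{i=0}^{\norm{\vec t-\vec t'}-1} 2 \;=\; |d-d'|+2\norm{\vec t-\vec t'}.
\end{align*}

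The one point that needs a short sanity check is the feasibility of the intermediate problems: each invocation of Theorem~\ref{thm:dem} requires $\B_f(d_i)\neq\emptyset$ for every intermediate $d_i$, while each invocation of Corollary~\ref{cor:shift} requires nothing beyond $\B_f(d')\neq\emptyset$. Since the hypothesis of Theorem~\ref{thm:gen} asserts $\B_f(d'')\neq\emptyset$ for all $d''\in D$, and since the set $\{d'' \in \N : \B_f(d'') \neq \emptyset\} = \{0,1,\dots,f(E)\}$ is an interval of integers containing both $d$ and $d'$, every $d_i$ chosen above lies in $D$ and the corresponding intermediate problems admit optimal solutions for the inductive step to work on.

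The main (and essentially only) obstacle is therefore purely notational: one must be careful that the "intermediate" optimal solutions produced by Theorem~\ref{thm:shift}/Corollary~\ref{cor:shift} and by Theorem~\ref{thm:dem} are the precise inputs for the next iteration, so that the telescoping estimate above is valid. Once the induction hypothesis is set up on the quantity $k$ and the unit-step lemmas are invoked in sequence, no additional argument is needed.
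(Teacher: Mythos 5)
Your proof is correct and follows essentially the same route as the paper: an induction that chains the unit-step results (Theorem~\ref{thm:dem} for $d$, Corollary~\ref{cor:shift} for $\vec t$) and telescopes via the triangle inequality, including the same key observation that $\B_f(d'')\neq\emptyset$ for every intermediate demand value because the set of feasible demands is an integer interval. The only nit is that the intermediate values $d_i$ need not lie in $D$ itself --- what matters is only that $\B_f(d_i)\neq\emptyset$, which you have established --- but this does not affect the argument.
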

Note that for the proof, we can safely use induction on $d$ also for those $d$'s with $d_1,d_2\in D, d_1\leq d\leq d_2, d\notin D$, because $\B_f(d_i)\neq \emptyset, i=1,2$ implies $\B_f(d)\neq \emptyset$.
Moreover, regularity of $P(\vec t, d)$ does not depend on $d$.

The proofs of Theorem~\ref{thm:shift} and Theorem~\ref{thm:dem} also show 
that after a parameter change, a new optimal solution can be recovered by elementary exchange
steps, that is, by iteratively shifting one unit from one element to another, or, by adding one unit to an element.
We can summarize this discussion as follows.
\begin{corollary}
Let $D\subset \N$  denote the set of possible integral values of $d$.
Let $P(\vec t, d), d\in D$ be a set of regular optimization problems with 
$\B_f(d)\neq \emptyset$ for all $d\in D$.  Then, for every optimal solution $\vec x^*(\vec t,d)$ of $P(\vec t, d)$, $d\in D$, every $ d'\in D$  and every $\vec t'$, there is an optimal solution
$\vec x^*(\vec t',d')$ for $P(\vec t', d')$ that can be computed from
$\vec x^*(\vec t,d)$ by performing at most 
$\norm{\vec t - \vec t'}+|d-d'|$ elementary exchange steps.
\end{corollary}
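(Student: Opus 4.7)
The plan is to induct on the quantity $k := \norm{\vec t - \vec t'} + |d - d'|$, treating each parameter-unit-change as a single iteration, and to observe that Theorem~\ref{thm:shift} and Theorem~\ref{thm:dem} furnish at most one elementary exchange per unit change.

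For the base case $k=0$ we have $\vec t=\vec t'$ and $d=d'$, so $\vec x^*(\vec t, d)$ itself serves as an optimal solution for $P(\vec t', d')$ and is reached in zero exchange steps. For the inductive step $k \geq 1$, I would distinguish two cases. If $d \neq d'$, choose $d''$ to be $d+1$ or $d-1$ depending on the sign of $d'-d$. Theorem~\ref{thm:dem} constructs an optimal solution $\vec x^*(\vec t, d'')$ of the form $\vec x^*(\vec t,d) \pm \chi_{g^*}$, which is a single elementary exchange (adding or removing one unit at one element). If $d = d'$ but $\vec t \neq \vec t'$, pick any coordinate $e^*$ with $t_{e^*} \neq t'_{e^*}$, and set $\vec t'' = \vec t + \chi_{e^*}$ or $\vec t'' = \vec t - \chi_{e^*}$ according to the sign of $t'_{e^*} - t_{e^*}$. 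Theorem~\ref{thm:shift} (for increments; the remark following its proof supplies the analogue for decrements) provides an optimal $\vec x^*(\vec t'', d)$ which either coincides with $\vec x^*(\vec t, d)$ (zero exchanges) or differs from it by a single swap $-\chi_{e^*} + \chi_{g^*}$, again one elementary exchange step. In either case, the pair $(\vec t'', d'')$ satisfies $\norm{\vec t'' - \vec t'} + |d''-d'| = k-1$, so by the inductive hypothesis an optimal $\vec x^*(\vec t', d')$ can be recovered from $\vec x^*(\vec t'', d'')$ by at most $k-1$ further elementary exchanges. Concatenating with the single (or zero) step used in the present round yields a total of at most $k$ exchange steps, which is the required bound.

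The only mildly delicate point is making sure the intermediate polytopes $\B_f(d'')$ are nonempty so that Theorem~\ref{thm:dem} is applicable. This is exactly the content of the remark after Theorem~\ref{thm:gen}: if $d, d' \in D$ with $\B_f(d), \B_f(d') \neq \emptyset$ and $d \leq d'' \leq d'$, then $\B_f(d'') \neq \emptyset$ as well, because any base of $\B_f(d')$ dominates a base of $\B_f(d'')$ via $d'-d''$ successive unit decreases along the lines of the proof of Theorem~\ref{thm:dem}. Regularity of $C_e$ is independent of $d$ and $\vec t$, so the hypotheses of Theorems~\ref{thm:shift} and~\ref{thm:dem} remain valid at every intermediate step. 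Hence the induction closes and the corollary follows.

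The main obstacle is essentially bookkeeping: one must choose a fixed ordering of the $\norm{\vec t-\vec t'}+|d-d'|$ unit-changes and verify that along this path regularity and nonemptiness of the base polytopes are preserved, so that the two constructive theorems keep applying. Since the conclusions of Theorem~\ref{thm:shift} and Theorem~\ref{thm:dem} are precisely of the form ``perform at most one elementary exchange,'' no additional combinatorial argument is needed beyond this induction.
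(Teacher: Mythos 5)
Your proposal is correct and follows essentially the same route as the paper: the corollary is obtained by inductively applying Theorem~\ref{thm:shift} and Theorem~\ref{thm:dem} one unit-change at a time, noting that each application costs at most one elementary exchange, and using the paper's remark that intermediate truncations $\B_f(d'')$ are nonempty and that regularity is independent of the parameters. No gaps.
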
 

In the following section, we apply the above sensitivity results to
a quite general class of non-cooperative games, thereby establishing an
existence and computability result of pure Nash equilibria.

\section{Noncooperative Games on Polymatroids}
\label{sec:game}

We consider the following class of games. There
is a finite set $N=\{1,\dots,n\}$ of players and a finite set $E = \{1,\dots,m\}$ of elements. As it is standard in the congestion game literature, in this section we refer to the elements $e\in E$ as \emph{resources}. Each player~$i$ is associated with a demand $d_i \in \N$ and an integral polymatroid rank function $f_i : 2^E \to \N$ that together define a $d_i$-truncated integral polymatroid $\P_{f_i}(d_i)$ with base polytope $\B_{f_i}(d_i)$. A strategy of
player~$i \in N$ is to choose a vector $\vec x_i  = (x_{i,e})_{e \in E} \in \B_{f_i}(d_i)$, i.e., player~$i$ chooses an integral resource consumption $x_{i,e} \in \N$ for each resource $e$
such that the demand $d_i$ is exactly distributed among the resources and for each $U \subseteq E$ not more than $f_i(U)$ units of demand are distributed to the resources contained in $U$.
Using the notation $\vec x_i=(x_{i,e})_{e \in E}$, the set $X_i$ of feasible strategies of player~$i$ is
defined as
\begin{align*}
X_i = \B_{f_i}(d_i)
= \Bigl\{\vec x_i \in \N^{E} \mid x_{i}(U) \le f_i(U) \text{ for all } U \subseteq 
E, x_{i}(E) =d_i \Bigr\},
\end{align*}
where, for a set $U \subseteq E$, we write $x_i(U)$ shorthand for $\sum_{e \in U} x_{i,e}$.
The Cartesian product $X = X_1 \times X_2 \times \dots \times X_n$ of the players' sets of feasible
strategies is the joint strategy space. An
element $\vec x = (\vec x_i)_{i \in N} \in X$ is a strategy profile. For a resource $e$, and a strategy profile $\vec x 
\in X$, we write $x_e = \sum_{i \in N}
x_{i,e}$ and $x_{-i,e} = \sum_{j \in N\setminus\{i\}}
x_{j,e}$. The private cost of
player~$i$ under strategy profile $\vec x \in X$ is defined as 
\[\pi_i(\vec x) = \sum_{e \in E}
C_{i,e}(x_{i,e};x_{-i,e}).\]
We assume that every $C_{i,e}$ fulfills the conditions of Assumption~\ref{ass:cost-general}.
In the remainder of the paper, we will compactly
represent the strategic game by the tuple 
$ G=(N,X,(d_i)_{i\in N},(C_{i,e})_{i\in N,e \in E}).$
We use standard game theory notation. For a player $i \in N$ and a strategy profile $\vec x \in X$,
we write $\vec x$ as $(\vec x_i,\vec x_{-i})$. A \emph{best response} of player~$i$ to $\vec x_{-i}$ is a strategy $\vec 
x_i \in X_i$ with $\pi_i(\vec x_i, \vec{x}_{-i}) \leq \pi_i(\vec y_i, \vec{x}_{-i})$ for all $\vec y_i \in X_i$.
A pure Nash equilibrium is a strategy profile $\vec x \in X$ such that for each player~$i$ the strategy $\vec x_i$ is a 
best response to $\vec x_{-i}$.

\subsection{Notable Special Cases}
\label{apps:examples}

We proceed to illustrate that we obtain the well known classes of integer-splittable singleton congestion games and matroid congestion games as special cases of congestion games on integer polymatroids.

\subsubsection{Singleton Integer-Splittable Congestion Games}
Tran-Thanh et al.~\cite{TranPCRJ11} consider singleton integer-splittable congestion games where each player~$i$ is associated with an integral demand $d_i \in \N$ that needs to be distributed integrally over a player-specific subset $E_i \subseteq E$ of resources. Every resource has a non-decreasing and convex cost function $c_e : \N \to \R_+$ and the private cost of a player~$i$ is equal to
\begin{align*}
\pi_i(\vec x) = \sum_{e \in E} c_e(x_{i,e}+x_{-i,e})x_{i,e}.
\end{align*}

We proceed to show that this class of games is contained in the class of polymatroid games as a special case. 

\begin{proposition}
Singleton integer-splittable congestion games are polymatroid games.
\end{proposition}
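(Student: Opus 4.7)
The plan is to exhibit, for each player $i$, an integral polymatroid rank function $f_i$ and resource cost functions $C_{i,e}$ that reproduce the singleton integer-splittable strategy space and private cost, and then verify that these satisfy all the assumptions required of a polymatroid game (i.e., $f_i$ is normalized, monotone, submodular, and each $C_{i,e}$ is regular in the sense of Definition~\ref{reg:mon}--\ref{reg:shift}).

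For the combinatorial part, I would set
\[
f_i(U) = \begin{cases} d_i, & \text{if } U \cap E_i \neq \emptyset,\\ 0, & \text{otherwise,}\end{cases}
\]
and first check that this is normalized (clear since $\emptyset \cap E_i = \emptyset$), monotone (clear from the definition), and submodular. Submodularity reduces to a short case analysis on whether $U$, $V$, $U \cap V$ intersect $E_i$: the only nontrivial case is when both $U$ and $V$ meet $E_i$, in which case the right-hand side of the submodular inequality is at most $2d_i$, matching the left-hand side. Then I would argue that $\B_{f_i}(d_i)$ coincides with the singleton integer-splittable strategy space: the constraint $x_i(E \setminus E_i) \leq f_i(E \setminus E_i) = 0$ forces $x_{i,e} = 0$ for $e \notin E_i$, while $x_i(E) = d_i$ gives exactly the demand constraint; all other polymatroid inequalities reduce to $x_i(U) \leq d_i$, which are implied by nonnegativity together with $x_i(E) = d_i$.

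For the cost part, I would set $C_{i,e}(x;t) := c_e(x + t)\,x$ so that $\sum_{e \in E} C_{i,e}(x_{i,e}; x_{-i,e}) = \pi_i(\vec x)$. The main work is to verify the two regularity inequalities. Writing $c = c_e$, a direct calculation gives
\[
C_{i,e}^-(x;t+1) - C_{i,e}^-(x;t) = x\bigl(c(x+t+1)-c(x+t)\bigr) - (x-1)\bigl(c(x+t)-c(x+t-1)\bigr),
\]
which is nonnegative because $c$ is nondecreasing and convex: the convexity gives $c^+(x+t) \geq c^-(x+t) \geq 0$, so $x\,c^+(x+t) \geq (x-1)\,c^-(x+t)$. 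This yields \eqref{reg:mon}. For \eqref{reg:shift}, a cleaner cancellation gives
\[
C_{i,e}^-(x+1;t) - C_{i,e}^-(x;t+1) = c(x+t+1) - c(x+t) \geq 0,
\]
using monotonicity of $c$.

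The only place where care is needed is handling the $x = 0$ boundary of the discrete derivatives (the definition requires $x \geq 1$), but this is not a real obstacle; the main subtlety is the sign manipulation in verifying \eqref{reg:mon}, where one must exploit both convexity and monotonicity of $c_e$ simultaneously rather than just one of them. With these verifications in place, the tuple $(N, (\B_{f_i}(d_i))_{i \in N}, (d_i)_{i \in N}, (C_{i,e})_{i,e})$ is a polymatroid game whose private costs agree with the given singleton integer-splittable game, completing the proof.
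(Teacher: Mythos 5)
Your proposal is correct and follows essentially the same route as the paper: the same rank function $f_i$, the same cost functions $C_{i,e}(x;t)=c_e(x+t)\,x$, and the same cancellation $C_{i,e}^-(x+1;t)-C_{i,e}^-(x;t+1)=c_e(x+t+1)-c_e(x+t)\ge 0$ for \eqref{reg:shift}. You are in fact slightly more thorough than the paper, which only verifies \eqref{reg:shift} explicitly, whereas you also carry out the (correct) check of \eqref{reg:mon} using both monotonicity and convexity of $c_e$, and you verify that $\B_{f_i}(d_i)$ really coincides with the singleton strategy space.
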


\begin{proof}
For $i \in N$ and $U \subseteq E$, we let
\[
f_i(U) =
\begin{cases}
d_i, &\text{if $U \cap E_i \neq \emptyset$}\\
0,	&\text{otherwise}. 	
\end{cases}
\]
For $i \in N$ and $e \in E$, we let $C_{i,e}(x_{i,e};x_{-i,e}) = c_e(x_{i,e}+x_{-i,e})x_{i,e}$. 
First, we show that the functions $f_i$ are normalized, monotone and submodular. For submodularity, it suffices to show that $f(U \cup \{v\})-f(U) \geq f(V \cup \{v\}) -f(V)$ for all $U \subseteq V$ and $v \notin U$. The inequality can only be violated if $f(V \cup \{v\}) -f(V) = d_i$ which implies $v \in E_i$ and $V \cap E_i = \emptyset$. This, however, implies $U \cap E_i = \emptyset$ and, thus, $f(U \cup \{v\})-f(U) = d_i$.

We proceed to show that the cost functions are regular provided that $c_e$ is non-decreasing and convex. Discrete convexity is easy to verify. For regularity, we compute for arbitrary $i \in N$ and $e \in E$
\begin{align*}
C_{i,e}^-(x;t+1)& = c_e(x+t+1)x - c_e(x+t)(x-1)\\
& \leq c_e(x+t+1)(x+1) - c_e(x+t)x \\
&= C_{i,e}^-(x+1;t),
\end{align*}
where the inequality follows since $c$ is non-decreasing.
\end{proof}

\subsubsection{Matroid Congestion Games with Player-Specific Costs}
Ackermann et al.~\cite{Ackermann09} studied matroid congestion games with player-specific costs, where each player $i$ is associated with a \emph{matroid} $M_i=(E_i,\I_i,)$
defined on some player-specific subset $E_i\subseteq E$. The strategy space for
every $i\in N$ is equal to the set $\mathcal{B}_i$ of bases of $M_i$. Given a strategy profile $(B_1,\dots,B_n)$ with $B_j \in \mathcal{B}_j$ for all players~$j$, the private cost of player~$i$ is defined as
\begin{align*}
\pi_i(B_1,\dots,B_n) = \sum_{e \in B_i} c_{i,e}(|\{j \in N : e \in B_j\}|),
\end{align*}
where the functions $c_{i,e} : \N \to \R_+$ with $i\in N$ and $e \in E$ are non-decreasing. We proceed to show that this class of games is contained in the class of polymatroid games as a special case.

\begin{proposition}
Matroid congestion games are polymatroid games.
\end{proposition}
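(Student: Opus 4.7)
The plan is to exhibit, for each player $i$, an integral polymatroid rank function $f_i:2^E\to\N$ and a demand $d_i$ such that the base polytope $\B_{f_i}(d_i)$ is in one-to-one correspondence with the bases $\mathcal{B}_i$ of the matroid $M_i$, and to extend the resource costs $c_{i,e}$ to regular bivariate functions $C_{i,e}$ that reproduce the matroid cost $\pi_i$ on this correspondence. First, let $r_i$ denote the rank function of $M_i$ and define $f_i(U):=r_i(U\cap E_i)$ for all $U\subseteq E$ and $d_i:=r_i(E_i)$. Because $r_i$ is a matroid rank function, $f_i$ is normalized, monotone, and submodular (submodularity is preserved under contracting to a subset of the ground set, here intersecting with $E_i$), so $f_i$ is an integral polymatroid rank function.

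Next, I would identify $\B_{f_i}(d_i)$ with $\mathcal{B}_i$. Since $f_i(\{e\})\in\{0,1\}$, every $\vec x_i\in\B_{f_i}(d_i)$ is a $\{0,1\}$-vector with support contained in $E_i$. The constraints $x_i(U)\le r_i(U\cap E_i)$ for all $U$ together with $x_i(E)=r_i(E_i)$ force the support to be an independent set of $M_i$ of full rank, hence a basis; conversely, the indicator vector of any basis lies in $\B_{f_i}(d_i)$. Thus choosing $\vec x_i$ in the polymatroid game corresponds to choosing a basis $B_i$ in the matroid game, and $x_{-i,e}=|\{j\ne i\,:\,e\in B_j\}|$ under this identification.

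Finally, define the player-specific cost function
\begin{equation*}
C_{i,e}(x;t)\;:=\;\sum_{k=1}^{x} c_{i,e}(k+t),
\end{equation*}
with the empty sum $C_{i,e}(0;t)=0$. At the values $x\in\{0,1\}$ that arise in the game we have $C_{i,e}(1;t)=c_{i,e}(1+t)$ and $C_{i,e}(0;t)=0$, so $\sum_e C_{i,e}(x_{i,e};x_{-i,e})$ reproduces $\pi_i$ exactly. Moreover, $C_{i,e}^-(x;t)=c_{i,e}(x+t)$, so regularity is immediate: \eqref{reg:mon} becomes $c_{i,e}(x+t)\le c_{i,e}(x+t+1)$, which holds by monotonicity of $c_{i,e}$, and \eqref{reg:shift} becomes $c_{i,e}(x+t+1)\le c_{i,e}(x+t+1)$, an equality. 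This shows the constructed instance is a polymatroid game satisfying Assumption~\ref{ass:cost-general}.

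The main subtlety I anticipate is the choice of the extension of $c_{i,e}$ to $C_{i,e}$. The natural analogue of the singleton integer-splittable definition, $C_{i,e}(x;t)=x\cdot c_{i,e}(x+t)$, would require $c_{i,e}$ to be discrete convex, whereas Ackermann et al.\ only assume monotonicity. The telescoping extension $\sum_{k=1}^x c_{i,e}(k+t)$ circumvents this: its left discrete derivative $c_{i,e}(x+t)$ depends on $x$ and $t$ only through their sum, which makes \eqref{reg:shift} an equality and needs only monotonicity of $c_{i,e}$ for \eqref{reg:mon}. Values $x\ge 2$ never occur in feasible strategies because $f_i(\{e\})\le 1$, so the extension is game-theoretically inert and serves purely to validate regularity.
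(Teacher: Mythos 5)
Your proposal is correct and follows essentially the same route as the paper: the same polymatroid rank function $f_i(U)=r_i(U\cap E_i)$ with $d_i=r_i(E_i)$, the same identification of $\B_{f_i}(d_i)$ with the bases of $M_i$, and a cost function agreeing with the paper's ($C_{i,e}(1;t)=c_{i,e}(1+t)$, $C_{i,e}(0;t)=0$) on the only values of $x$ that occur. The one place you improve on the paper is the telescoping extension $C_{i,e}(x;t)=\sum_{k=1}^{x}c_{i,e}(k+t)$, which makes $C_{i,e}$ regular for \emph{all} $x,t\in\N$ using only monotonicity of $c_{i,e}$, whereas the paper simply asserts that regularity need only be checked for $x\in\{0,1\}$.
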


\begin{proof}
For a player~$i$ we associate with each basis $B_i \in \mathcal{B}_i$ its characteristic vector $\vec x_i(B_i) = (x_{i,e}(B_i))_{e \in E}$ defined as
\[
x_{i,e}(B_i) =
\begin{cases}
1, &\text{if $e \in B_i$}\\
0, &\text{otherwise.}	
\end{cases}
\]
It is well known, that for each matroid $M_i = (E_i,\I_i)$, there is a function $\text{rk} : E_i \to \N$, called the rank function of the matroid, such that
\[
\{\vec x_i(B_i) : B_i \in \mathcal{B}_i\} = \{\vec x_i \in \N^{E_i} : x_i(E_i) = \text{rk}(E_i) \text{ and } x_i(U) \leq \text{rk}(U) \text{ for all } U \subseteq E_i\}.
\]
Moreover, the rank function is normalized, monotone and submodular. We let $f_i(U) = \rk(U \cap E_i)$ and let $d_i = \rk(E_i)$. Then, $f_i$ is submodular since for all $U,V \in 2^E$ we have
\begin{align*}
f_i(U) + f_i(V) &= \text{rk}(U \cap E_i) + \text{rk}(V \cap E_i)\\
&\geq \text{rk}((U \cap E_i) \cup (V \cap E_i)) + \text{rk}(U \cap V \cap E_i)\\
&= f_i(U \cup V) + f_i(U \cap V)
\end{align*}
where the inequality uses the submodularity of the rank function.

For $i \in N$ and $e \in E$, let 
\[
C_{i,e}(x;t) =  
\begin{cases}
c_{i,e}(x + t), &\text{ if $x_{i,e}$ = 1}\\
0, &\text{ if $x_{i,e}$ = 0},	
\end{cases}
\]
and note that the rank function is subcardinal, i.e., $\text{rk}(U) \leq |U|$ for all $U \subseteq E$, so that $\text{rk}(\{r\}) \leq 1$ for all $e \in E$ and, thus, $C_{i,e}$ is well-defined. As a consequence, we need to require regularity and discrete convexity only for $x \in \{0,1\}$. As for regularity, we obtain
\begin{align*}
C_{i,e}(0;t+1) = 0 \leq c_{i,e}(1+t) = C_{i,e}(1;t)
\end{align*}
for all $t \in \N$ by the non-negativity of $c_{i,e}$. As for discrete convexity, we do not need to require discrete convexity of the function $x \mapsto C(x;t)$ as $x$ only takes two different values. Moreover, since $C_{i,e}(0;t) =0$ for all $t$ we only have to check that $t \mapsto C_{i,e}(1;t)$ is discrete convex. To this end, we calculate
\begin{align*}
C_{i,e}^-(1;t) &= C_{i,e}(1+t_e) - c_{i,e}(1+t_e-1)\\
&\leq c_{i,e}(1+t_e+1) - c_{i,e}(1+t_e),
\end{align*}
where we used that $c_{i,r}$ is non-decreasing.
\end{proof}

\section{Equilibrium Existence}
In this section, we give an algorithm that computes
a pure Nash equilibrium for polymatroid games.
Our algorithm relies on the two sensitivity results stated in  Theorem~\ref{thm:shift}
and Theorem~\ref{thm:dem}.

\subsection{The Algorithm}
Both sensitivity results are used as the main
building blocks for Algorithm~\ref{alg:greedy}
that computes a pure Nash equilibrium
for congestion games on integral polymatroids.
Algorithm~\ref{alg:greedy}
maintains \emph{preliminary demands, strategy spaces, and strategies} of the players
denoted by $\bar d_i\leq d_i$, $\bar X_i=X_i(\bar d_i)$, and $\vec x_i\in \bar X_i$, respectively. Initially,  $\bar{d}_i$ is set to zero for all $i\in N$ and the strategy profile, where the strategy of each player equals the zero vector is a pure Nash equilibrium for this game in which the demand of each player is zero.

Then, in each round, for some player~$i$ the demand is increased from 
$\bar d_i$ to $\bar d_i+1$, and a best response $\vec y_i\in X(\bar d_i+1)$ with $\norm{\vec x_i- \vec y_i}=1$ is computed, see Line~\ref{it:comp_demand} in Algorithm~\ref{alg:greedy}. By Theorem~\ref{thm:dem},
such a best response always exists. In effect, the load on exactly one resource $e$
increases and only those players~$j$ with $x_{j,e} > 0$ on this
resource can potentially decrease their private cost by a deviation.
By Theorem~\ref{thm:shift}, a best response of such players  consists w.l.o.g. of moving a single unit from this resource to another resource, see Line~\ref{it:choose_yi} of Algorithm~\ref{alg:greedy}.
As a consequence, during the while-loop (Lines~\ref{it:if}-\ref{it:endif}),
only one additional unit (compared to the previous iteration)
is moved preserving the invariant that only players using a resource
to which this additional unit is assigned may have an incentive to profitably deviate.
Thus, if the while-loop is left, the current strategy profile~$\vec x$
is a pure Nash equilibrium for the reduced game $\bar{G} = (N, \bar X, \bar d,(C_{i,e})_{i \in N, e \in E})$.

\begin{algorithm}[tb]
 \caption{Compute PNE}
 \label{alg:greedy}
 \Indm\Indmm
    \KwIn{$G=(N,X,(d_i)_{i\in N},(C_{i,e})_{i\in N,e \in E})$}
  \KwOut{pure Nash equilibrium $\vec x$}
  \Indp\Indpp
$\bar d_i \leftarrow 0, \bar X_i\leftarrow X_i(0)$ and $\vec x_i\leftarrow \vec 0$ for all $i\in N$\;
		\For{$k=1,\dots, \sum_{i\in N}d_i$}{	
			Choose $i\in N$ with $\bar d_i<d_i$\;\label{it:choose_player}
			$\bar d_i\leftarrow \bar d_i+1$; $\bar X_i\leftarrow X_i(\bar d_i)$\;\label{it:set_increase_demand}
	Choose a best response $\vec y_i \in \bar{X}_i$ with $\norm{\vec y_i-\vec x_i} = 1$\;\label{it:comp_demand}
$\vec x_i \leftarrow  \vec y_i$\; \label{it:increase_demand}
\While{$\exists i\in N$ who can improve in $\bar G = (N, \bar X, \bar d,(C_{i,e})_{i \in N, e \in E})$
\label{it:if}}
 {Compute a best response $\vec y_i \in \bar X_i$ with $\norm{\vec y_i- \vec x_i}=2$\;  \label{it:choose_yi}
$\vec x_i \leftarrow  \vec y_i$\;}
\label{it:endif}}
Return $\vec x$\;
\end{algorithm}

Now we are ready to prove the main existence result.
\begin{theorem}
\label{thm:main}
Polymatroid games possess a pure Nash equilibrium.
\end{theorem}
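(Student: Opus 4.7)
The plan is to prove that Algorithm~\ref{alg:greedy} is well-defined, terminates, and returns a pure Nash equilibrium of $G$. I would maintain the invariant that at the top of every iteration of the outer for-loop, the current profile $\vec x$ is a pure Nash equilibrium of the reduced game $\bar G = (N, \bar X, \bar d, (C_{i,e})_{i\in N, e\in E})$. The base case $k=1$ is immediate since every $\bar d_i = 0$ forces $\vec x_i = \vec 0$ and no player has any alternative feasible strategy. Once the last outer iteration finishes with $\bar d = d$, the invariant delivers a pure Nash equilibrium of $G$ itself, which is exactly the statement of Theorem~\ref{thm:main}.

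For the inductive step, I would assume that $\vec x$ is a PNE of $\bar G$ and that player $i$ is selected in Line~\ref{it:choose_player}. The best-response computation of player $i$ to the new demand $\bar d_i + 1$ is a single-agent convex regular optimization problem on $\B_{f_i}(\bar d_i + 1)$ with the parameter vector $\vec t = \vec x_{-i}$ held fixed. Since $\vec x_i$ is optimal for $\B_{f_i}(\bar d_i)$ by the invariant, Theorem~\ref{thm:dem} provides an optimal $\vec y_i$ with $\norm{\vec y_i - \vec x_i} = 1$, which amounts to placing one extra unit on some resource $e$ and justifies Line~\ref{it:comp_demand}. After this update every player $j \neq i$ still faces the same demand and the same polytope, but their parameter vector has changed from $\vec x_{-j}$ to $\vec x_{-j} + \chi_e$. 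Invoking Theorem~\ref{thm:shift} with $e^* = e$ and $\vec t' = \vec x_{-j} + \chi_e$, a best response of $j$ can always be taken to be either $\vec x_j$ itself or a profile obtained from $\vec x_j$ by moving a single unit away from $e$ to some resource $g^*$, so $\norm{\vec y_j - \vec x_j} \in \{0, 2\}$, justifying the best-response format used in Line~\ref{it:choose_yi}.

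The hard part will be proving termination of the inner while-loop. The key structural observation is that just before each execution of Line~\ref{it:choose_yi}, the profile $\vec x$ differs from the PNE of the game with demands $\bar d - \chi_i$ that existed at the top of the outer iteration by exactly one extra unit placed on a single resource, which I will call the \emph{excess resource}. By the previous paragraph, every best-response step in the while-loop transfers this single unit from the current excess resource $e$ to a new one $g$, so the inner loop traces a walk of excess resources on $E$. To bound its length I would use that whenever player $j$ moves the unit from $e$ to $g$, the strict improvement condition forces $C_{j,e}^-(x_{j,e}; x_{-j,e}) > C_{j,g}^+(x_{j,g}; x_{-j,g})$, and combine this with the regularity and discrete convexity of the $C_{j,e}$ to argue that after the move the marginal cost at the old excess resource can only decrease while the one at the new excess resource can only increase. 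The main obstacle is to package these monotonicities into a lexicographic potential on the sorted vector of occupied-slot marginal costs $(C_{j,e}^-(x_{j,e}; x_{-j,e}))_{j \in N, e \in E,\, x_{j,e} \geq 1}$, in the spirit of the argument for matroid congestion games by Ackermann et al.~\cite{AckermannRV09}, that strictly decreases at every best response. Since such a potential lives in a finite set whose size is polynomial in $n$, $m$, and $\dmax$, termination of the inner loop — and hence of the algorithm — would follow, completing the proof.
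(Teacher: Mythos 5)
Your overall architecture --- the loop invariant that $\vec x$ is a pure Nash equilibrium of the reduced game $\bar G$, the use of Theorem~\ref{thm:dem} and Theorem~\ref{thm:shift} to justify the $\norm{\vec y_i - \vec x_i}=1$ and $\norm{\vec y_i - \vec x_i}=2$ best-response formats, the observation that inside the while-loop the profile differs from the reference equilibrium by a single excess unit that walks across resources, and a lexicographic potential to bound that walk --- is exactly the paper's proof. The one genuine gap is the potential itself. The candidate you propose, the sorted vector of $C_{j,e}^-(x_{j,e};x_{-j,e})$ over occupied slots, does \emph{not} decrease lexicographically: when player $i$ moves the excess unit from $e_{l-1}$ to $e_l$, every other player $h$ with $x_{h,e_l}\geq 1$ sees $x_{-h,e_l}$ increase by one, so by \eqref{reg:mon} the entry $C_{h,e_l}^-(x_{h,e_l};x_{-h,e_l})$ \emph{increases}, and nothing ties this increase to the value that the deviator's improvement decreased (player $h$'s cost on $e_l$ may be arbitrarily steep compared to player $i$'s cost on $e_{l-1}$). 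Hence the sorted vector can go up in the lexicographic order and your termination argument stalls.

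The paper's fix is to define the per-unit marginal cost asymmetrically with respect to the current excess resource $e_l$: a unit of player $i$ sitting on $e_l$ is charged $C_{i,e}^-(x_{i,e};x_{-i,e})$, while a unit sitting on any other resource is charged $C_{i,e}^-(x_{i,e};x_{-i,e}+1)$, i.e., its marginal cost \emph{after} the excess unit hypothetically arrives there. With this definition, when the excess unit moves from $e_{l-1}$ to $e_l$ the entries of all non-deviating players are unchanged --- the $+1$ in the parameter slot exactly compensates the case switch on both the old and the new excess resource --- so only the deviator's entries move, and those are controlled by the strict-improvement inequality $C_{i,e_{l-1}}^-(\cdot)>C_{i,e_l}^+(\cdot)$ together with regularity and discrete convexity. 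Without this adjustment (or an equivalent device) the lexicographic decrease fails, so this is the piece you still need to supply; everything else in your outline matches the paper's argument.
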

\begin{proof}
We prove by induction on the total demand $d = \sum_{i\in N}d_i$ of the input game $G = (N, X, (d_i)_{i \in N}, 
(C_{i,e})_{i \in N, e \in E})$ that Algorithm~\ref{alg:greedy}
computes a pure Nash equilibrium of $G$.

For $d = 0$, this is trivial.
Suppose that the algorithm works correctly for games with total demand $d-1$ for some $d \geq 1$ and consider a game $G$ with total demand $d$.
Let us assume that in Line~\ref{it:choose_player}, the algorithm always chooses a player with minimum index. Consider the game $G' = (N,X,(d'_i)_{i\in N}, (C_{i,e})_{i \in N, e \in E})$ that differs from $G$ only in the fact that the demand of the last player~$n$ is reduced by one, i.e. $d'_i = d_i$ for all $i<n$ and $d_n' = d_n - 1$. Then, when running the algorithm with $G'$ as input, the $d-1$ iterations (of the for-loop) are equal to the first $d-1$ iterations when running the algorithm with $G$ as input. Thus, with $G$ as input, we may assume that after the first $d-1$ iterations, the preliminary strategy profile that we denote by $\vec x'$ is a pure Nash equilibrium of $G'$.

We analyze the final iteration $k=d$ of the algorithm in which the demand of player~$n$ is increased by $1$ (see Line~\ref{it:set_increase_demand}).
In Line~\ref{it:comp_demand}, a best reply $\vec y_n$ with $\norm{\vec x_n-\vec y_n} = 1$
is computed which exists by Lemma~\ref{thm:dem}. Then, as long as there is a player~$i$ that can improve unilaterally, in Line~\ref{it:choose_yi}, a best response $\vec y_i$ with $\norm{\vec y_i- \vec x_i} = 2$ is computed which exists by Lemma~\ref{thm:shift}.

It remains to show that the while-loop in Lines~\ref{it:if}--\ref{it:endif} terminates.
To prove this, we give each unit of demand of each player~$i \in N$ an identity
denoted by $i_j, j=1,\dots,d_i$. For a strategy profile $\vec x$, we define
$r(i_j,\vec x)\in E$ to be the resource to which  unit $i_j$ is assigned in strategy profile $\vec x$.
Let $\vec x^l$ be the strategy profile after Line~\ref{it:choose_yi} of the algorithm has been executed the $l$-th time, where we use the convention that $\vec x^0$ denotes the preliminary strategy profile when entering the while-loop. As we chose in Line~\ref{it:comp_demand} a best reply $\vec y_n$ of player~$n$ with $\norm{\vec x_n-\vec y_n} = 1$, there is a unique resource $e_0$ such that $x_{e_0}^0 = x'_{e_0} + 1$ and $x_{e}^0 = x_{e}'$ for all $e \in E \setminus \{e_0\}$. Furthermore, because we choose in Line~\ref{it:choose_yi}
a best response $\vec y_i$ with $\norm{\vec y_i- \vec x_i} = 2$, a simple inductive claim shows that after each iteration~$l$ of the while-loop, there is a unique resource $e_l \in E$ such that $x_{e_l}^l = x_{e_l}' + 1$ and $x_{e}^l = x_{e}'$ for all $e \in E \setminus \{e_l\}$.

For any $\vec x^{l}$ during the course of the algorithm, we define the
\emph{marginal cost} of unit $i_j$
under strategy profile $\vec x^{l}$ as
\begin{align}\label{def:delta}
\Delta_{i_j}(\vec x^l)=
\begin{cases} 
C_{i,e}^-(x_{i,e}^l;x_{-i,e}^l), & \text{if }e=e(i_j,\vec x^{l})=e_{l}\\
C_{i,e}^-(x_{i,e}^l;x_{-i,e}^l+1),
& \text{if }e=e(i_j,\vec x^{l})\neq e_{l}.\end{cases}
\end{align}
Intuitively, if $e(i_j,\vec x^{l})=e_{l}$, the value $\Delta_{i_j}(\vec x^{l})$ measures the \emph{cost saving} on resource 
$e(i_j,\vec x^{l})$ if $i_j$ (or any other unit of player $i$ on resource $e(i_j,\vec x^{l})$)
is removed from $e(i_j,\vec x^{l})$.
If $e(i_j,\vec x^{l})\neq e_{l}$,
the value $\Delta_{i_j}(\vec x^{l})$ measures
the cost saving if  $i_j$ is removed from $e(i_j,\vec x^{l})$ after the total allocation 
has been increased by one unit by some other player.
For a strategy profile $\vec x$
we define $\Delta(\vec x)=(\Delta_{i_j}(\vec x))_{i=1,\dots,n, j=1,\dots, d_i}$ to be the vector
of marginal costs and let
$\bar \Delta(\vec x)$ be the vector of marginal costs sorted in non-increasing order.
We claim that $\bar \Delta(\vec x)$ decreases lexicographically
during the while-loop.
To see this, consider an iteration $l$ in which some unit $i_j$
of player $i$
is moved from resource $e_{l-1}$ to resource $e_{l}$.

For proving $\bar \Delta(\vec x^{l})<_{\text{lex}} \bar \Delta(\vec x^{l-1})$,
we first observe that 
we only have to care for $\Delta$-values that correspond to units $i_j$ of the deviating player~$i$, because for all players $h\neq i$ we obtain
$\Delta_{h_j}(\vec x^{l-1})=\Delta_{h_j}(\vec x^{l})$ 
for all $j=1,\dots, d_h$.
This follows immediately if 
$h_j$ is neither assigned to $e_{l-1}$ nor to $e_{l}$.
If $h_j$ is assigned to $e_{l-1}$ or $e_{l}$,
then we switch the case in \eqref{def:delta},
and the claimed equality still holds.
It remains to consider the $\Delta$-values corresponding to the units of the deviating player~$i$. Recall that the deviation of player~$i$ consists of moving unit $i_j$ from resource $e_{l-1}$ to resource $e_{l}$.
We obtain
\begin{align*}
\Delta_{i_j}(\vec x^{l-1}) &=C_{i,e_{l-1}}^-(x_{i,e_{l-1}}^{l-1};x_{-i,e_{l-1}}^{l-1})
\\&>C_{i,e_{l}}^+(x_{i,e_{l}}^{l-1};x_{-i,e_{l}}^{l-1})\\
&=C_{i,e_{l}}^-(x_{i,e_{l}}^{l-1}+1;x_{-i,e_{l}}^{l-1})\\
&=C_{i,e_{l}}^-(x_{i,e_{l}}^{l};x_{-i,e_{l}}^{l})\\
& =  \Delta_{i_j}(\vec x^{l}),
\end{align*}
where the strict inequality follows since player $i$ strictly improves.
For every unit $i_m$ of player~$i$ that is assigned to resource $e_l$ as well, i.e, $ e(i_m,\vec x^{l})= e(i_j,\vec x^{l})=e_{l}$, we have
 $\Delta_{i_j}(\vec x^{l})=\Delta_{i_m}(\vec x^{l})$
 since the $\Delta$-value is the same for all units of a single player
assigned to the same resource. The $\Delta$-values
of such units $i_m$ might have increased, but only to the $\Delta$-value of unit $i_j$.

Next, consider the $\Delta$-values of a unit $i_m$ assigned to resource $e_{l-1}$, i.e., $ e(i_m,\vec x^{l})= e(i_j,\vec x^{l-1})=e_{l-1}$.
We obtain
\begin{align*}
\Delta_{i_m}(\vec x^{l})&=C_{i,e_{l-1}}^-(x_{i,e_{l-1}}^{l};x_{-i,e_{l-1}}^{l}+1)\\
&=C_{i,e_{l-1}}^-(x_{i,e_{l-1}}^{l-1}-1;x_{-i,e_{l-1}}^{l-1}+1)\\
&\leq C_{i,e_{l-1}}^-(x_{i,e_{l-1}}^{l-1};x_{-i,e_{l-1}}^{l-1})
\\ & = \Delta_{i_m}(\vec x^{l-1}),
\end{align*}
where for the inequality we used that $C_{i,e_{l-1}}$ is regular.
Altogether, the $\Delta$-values of all units of all players $h\neq i$
have not changed, for player $i$, the  $\Delta$-values of remaining units assigned to resource $e_{l-1}$
 decreased, and the $\Delta$-values assigned to resource $e_{l}$ increased exactly to $\Delta_{i_j}(\vec x^{l})$
which is strictly smaller than $\Delta_{i_j}(\vec x^{l-1})$.
Thus, $\bar \Delta(\vec x^{l})<_{\text{lex}} \bar \Delta(\vec x^{l-1})$ follows.
\end{proof}

The following corollary states an upper bound on the number of iterations of the algorithm in terms of $\delta = \max_{i \in N} d_i$.

\begin{corollary}\label{cor:runtime}
The number of iterations is at most $n^2 m \dmax^{3}$. 
\end{corollary}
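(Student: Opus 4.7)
The total iteration count decomposes as the product of the outer for-loop iterations and the inner while-loop iterations. The outer for-loop in Algorithm~\ref{alg:greedy} runs exactly $\sum_{i\in N} d_i \le n\dmax$ times, since each of its iterations strictly increases some preliminary demand $\bar d_i$ by one unit and the loop terminates as soon as $\bar d_i = d_i$ for all $i\in N$. It therefore suffices to bound a single execution of the inner while-loop by $nm\dmax^{2}$ iterations, as this yields the desired total of $n\dmax \cdot nm\dmax^{2} = n^{2}m\dmax^{3}$.

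\paragraph{Bounding the inner while-loop.}
To bound the while-loop, I would leverage the lexicographic decrease of the sorted marginal-cost vector $\bar\Delta(\vec x)$ that was already established in the proof of Theorem~\ref{thm:main}. Each entry of $\Delta(\vec x)$ is of the form $C_{i,e}^{-}(k_1;k_2)$ with $i\in N$, $e\in E$, $k_1\in\{1,\dots,\dmax\}$, and $k_2\in\{0,\dots,n\dmax\}$, so the set of possible per-entry values has size $\mathcal{O}(nm\dmax^{2})$.

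\paragraph{The main obstacle.}
The crux is that a raw application of the lex-decrease alone only yields an exponential bound, since there are too many distinct sorted non-increasing tuples of length $n\dmax$ with entries in the above value set. To sharpen it, I would exploit three structural observations present in the proof of Theorem~\ref{thm:main}: (i) in a single iteration, only the $\Delta_{i_j}$-values of the deviating player $i$ change; (ii) any $\Delta$-entry that can increase is capped by the strictly smaller value $\Delta_{i_{j^{*}}}(\vec x^{\ell})$ of the just-moved unit; and (iii) regularity condition~\eqref{reg:shift} forces the $\Delta$-entries on the source resource to weakly decrease. I would use these to argue that the lex-sorted vector of $\Delta$-values restricted to player $i$'s units strictly lex-decreases whenever $i$ moves and otherwise stays unchanged. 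Combined with a careful counting showing that each such per-player sorted vector can appear only $\mathcal{O}(m\dmax^{2})$ times along a single while-loop execution (one factor of $m\dmax$ for the distinct resource loadings of player $i$ and one factor of $\dmax$ for the own-load coordinate $k_1$), summing over the $n$ players gives the inner-loop bound of $nm\dmax^{2}$. The technical obstacle is to formalize this counting rigorously so as to avoid the exponential blow-up that a tuple-wise lex analysis would otherwise produce.
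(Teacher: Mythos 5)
Your high-level decomposition coincides with the paper's: $\sum_{i\in N} d_i \le n\dmax$ executions of the for-loop, each followed by at most $nm\dmax^{2}$ iterations of the while-loop. You are also right that the lexicographic decrease of $\bar\Delta$ by itself only proves termination, not a polynomial bound. But the step you defer --- ``a careful counting showing that each per-player sorted vector can appear only $\mathcal{O}(m\dmax^{2})$ times'' --- is precisely where the proof has to happen, and the route you sketch does not close it. Bounding how often a given sorted vector ``appears'' cannot bound the number of iterations unless you also bound the number of \emph{distinct} sorted vectors that occur, and a non-increasing tuple of length $d_i$ over even a small value set admits exponentially many possibilities; moreover, the quantity ``$m\dmax$ distinct resource loadings of player $i$'' is not meaningful as stated, since a loading of player~$i$ is a vector in $\N^E$, not a scalar.

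The ingredient you are missing is a sharpening of your per-entry value count. You take $k_2\in\{0,\dots,n\dmax\}$ and arrive at $\mathcal{O}(nm\dmax^{2})$ possible values per entry, but within a single execution of the while-loop the aggregate load on every resource is frozen at its value $x'_e$ from the start of the round, except for the one floating unit on $e_l$ --- and the case distinction in \eqref{def:delta} is designed exactly to absorb that unit. Consequently, every unit of player~$i$ residing on resource~$e$ has marginal cost $C^-_{i,e}(x_{i,e};\,x'_e+1-x_{i,e})$, a function of the own load $x_{i,e}\in\{1,\dots,d_i\}$ alone; so there are at most $d_i$ possible values per (player, resource) pair, not $\mathcal{O}(n\dmax)$ choices of $k_2$. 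Combined with the fact that a unit's marginal cost strictly decreases whenever it is moved, this yields the paper's conclusion that each unit of player~$i$ visits each resource at most $d_i$ times, hence at most $\sum_{i\in N} m\, d_i^{2}\le nm\dmax^{2}$ while-loop iterations per round and $n^{2}m\dmax^{3}$ in total. This argument counts unit--resource visits directly and dispenses entirely with enumerating sorted marginal-cost vectors; the lexicographic decrease is needed only to ensure that no configuration recurs.
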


We analyze the worst-case runtime of Algorithm~\ref{alg:greedy}. To this end, let us consider the iterations of the algorithm for fixed $k$. In the proof of Theorem~\ref{thm:main}, we showed that for fixed $k$, for each player, the sorted vector of marginal costs (as defined in \eqref{def:delta}) decreases lexicographically during the while-loop. Moreover, the marginal cost of a particular unit of demand $i_j$ of player~$i$ assigned to a resource $e$ does not depend on the aggregated demand $\sum_{j \in N} x_{j,e}$ of all players for resource $e$, but only on the number of units of demand $x_{i,e}$ assigned to $e$ by player~$i$. We derive that for each player~$i$ and each resource $e$ at most $d_i$ different marginal cost values can occur. This implies that each unit of demand of player~$i$ visits each resource at most $d_i$ times. Thus, the total number of iterations of the while-loop is bounded by $\sum_{i \in N} (m \cdot d_i^2)$. Setting $\dmax = \max_{i \in N} d_i$, this expression is bounded by $n m \dmax^2$, where $n = |N|$. Using that there are $\sum_{i \in N} d_i \leq n \cdot \dmax$ iterations of the for-loop, the claimed result follows.

\section{Non-Polymatroid Regions}
\label{sec:violation}
The proofs of the results obtained in \S~\ref{sec:sensitivity} and \S~\ref{sec:game} relied on the fact that the function $f$ is submodular and, thus, the feasible region of the optimization problem and the strategy spaces of the players, respectively, are polymatroids. One may wonder whether polymatroids are the maximal combinatorial structure for which these results hold. In this section, we will give an affirmative answer to this question. In fact, we will work towards showing that for \emph{every} normalized, monotonic and non-submodular function $f$, there is a convex and regular optimization problem with feasible set
\begin{align}
\label{eq:strategy_space}
\B_{f}(d)= \Bigl\{\vec  x\in \N^E \ \mid x(U)\le f(U) \text{ for all } U\subseteq E,\, x(E) = d \Bigr\}
\end{align}
with $d \in \N$ such that the sensitivity results of Theorem~\ref{thm:shift} and Theorem~\ref{thm:dem} do not hold. Moreover, there is a game with convex and regular cost functions where the players' strategies are isomorphic to \eqref{eq:strategy_space} that has not a pure Nash equilibrium. This implies that also for the existence result of Theorem~\ref{thm:main} the polymatroid structure is maximal.

For ease of exposition, we assume that $f$ is strictly positive in the sense that $f(U) > 0$ for all $U \in 2^E \setminus \emptyset$. This assumption can be made without loss of generality since a non-empty set of resources $U$ with $f(U) = 0$ is not used in any strategy anyway and, thus, can effectively be removed from $E$. 

Formally, let
\begin{align*}
\mathcal{X}(d) &= \Bigl \{\B_f(d) \mid f \text{ is strictly positive, normalized and monotonic} \Bigr\},\\
\mathcal{X}^*(d) &= \Bigl \{\B_f(d) \mid f \text{ is strictly positive, normalized, monotonic, and submodular} \Bigr\}
\end{align*}
denote the feasible regions that can be described by arbitrary and submodular functions $f$, respectively. 

First, note that $\mathcal{X}(1) = \mathcal{X}^*(1)$. To see this, note that $f(\{e\}) \geq 1$ as $f$ is strictly positive, so $f$ does not encode any constraints on where the single unit of demand can be put. Thus, the set of strategies can be described equivalently by a function $f'$ with $f'(U) = 1$ for all $U \in 2^E \setminus \{\emptyset\}$. It is straightforward to verify that $f'$ is submodular.

More interestingly, already for $d=2$, the feasible regions contained in $\mathcal{X}(2)$ and $\mathcal{X}^*(2)$ differ. 
We start with the following observations regarding the feasible regions in $\mathcal{X}(2) \setminus \mathcal{X}^*(2)$.
\begin{lemma}
\label{lem:non-triviual}
For any $X \in \mathcal{X}(2) \setminus \mathcal{X}^*(2)$ there are  $f : 2^E \to \N$  and $S,T \in 2^E$ such that
\begin{enumerate}
	\item $X = \B_f(2)= \bigl\{\vec  x\in \N^E \ \mid x(U)\le f(U) \text{ for all } U\subseteq E,\, x(E) = 2 \bigr\}$,
	\item For any constraint $x(U) \leq f(U)$, there is $\vec x \in \B_f(2)$ with $x(U) = f(U)$.\label{it:lemma_second_point}
	\item $f(S) = f(T) = f(S \cap T) =1$ and $f(S \cup T) = 2$.\label{it:lemma_third_point}
\end{enumerate}
\end{lemma}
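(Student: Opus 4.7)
The plan is to take $f$ to be the tightening
\[
f(U) := \max\{x(U) \mid \vec x \in X\} \qquad \text{for every } U \subseteq E,
\]
of any strictly positive, normalized, monotone $f_0$ with $X = \B_{f_0}(2)$ (such $f_0$ exists because $X \in \mathcal{X}(2)$). I would then show that $f$, together with a suitably chosen pair $S,T$, witnesses all three conclusions of the lemma.

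Verifying Points~1 and~2 is routine. Monotonicity and $f(\emptyset)=0$ are immediate from the definition via $\max$; since $f(U) \le f_0(U)$ for every $U$, one has $\B_f(2) \subseteq \B_{f_0}(2) = X$, while $X \subseteq \B_f(2)$ follows directly from the definition of $f$. Point~2 holds by construction, as the value $f(U)$ is attained by some $\vec x \in X$.

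The core of the argument is establishing Point~3. After the same WLOG cleanup that the paper uses to justify strict positivity of $f_0$, now applied to remove from $E$ any element $e$ with $f(\{e\})=0$, we may assume $f(\{e\}) \ge 1$ for every $e \in E$, which together with monotonicity makes $f$ itself strictly positive. If $f$ were additionally submodular, then $X = \B_f(2)$ would lie in $\mathcal{X}^*(2)$, contradicting the hypothesis. Hence there exist $S,T \subseteq E$ with $f(S) + f(T) < f(S\cup T) + f(S\cap T)$. Because $x(U) \le x(E) = 2$ for every $\vec x \in X$, $f$ takes values only in $\{0,1,2\}$, while monotonicity forces $f(S\cap T) \le f(S), f(T) \le f(S\cup T)$. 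A short case analysis over the admissible tuples for $(f(S\cap T), f(S), f(T), f(S\cup T))$ under these ordering and range constraints shows that the strict violation can occur only when $f(S\cap T) = f(S) = f(T) = 1$ and $f(S\cup T) = 2$, which is exactly Point~3.

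The main obstacle is precisely this cleanup step: one must argue that removing elements $e$ with $f(\{e\})=0$ does not change the status of $X$ with respect to $\mathcal{X}^*(2)$, since otherwise the tightened $f$ could be submodular yet fail to be strictly positive and the contradiction with $X \notin \mathcal{X}^*(2)$ would not go through. Concretely one has to identify $X \subseteq \N^E$ with its trivial embedding from $\N^{E'}$ where $E' = \{e \in E : f(\{e\}) \ge 1\}$. The case analysis itself is entirely routine and depends crucially on the fact that the rank equals $2$, which confines the range of $f$ to $\{0,1,2\}$.
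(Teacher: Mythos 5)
Your construction of $f$ as the pointwise maximum $f(U)=\max\{x(U): \vec x\in X\}$ is a cleaner packaging of what the paper does by iteratively decrementing non-tight constraints, and your verification of Points~1 and~2 and the arithmetic for Point~3 (values confined to $\{0,1,2\}$, monotonicity, strict positivity forcing the tuple $(1,1,1,2)$) are fine and match the paper's. The problem sits exactly at the step you flag as ``the main obstacle'': you assert, without proof, that deleting the elements $e$ with $f(\{e\})=0$ does not change whether $X$ belongs to $\mathcal{X}^*(2)$, and this assertion is false. Take $E=\{1,2,3\}$ and $f_0$ with $f_0(\{1\})=f_0(\{2\})=f_0(\{3\})=f_0(\{1,3\})=f_0(\{2,3\})=1$ and $f_0(\{1,2\})=f_0(E)=2$. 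This $f_0$ is strictly positive, normalized and monotone, and $X=\B_{f_0}(2)=\{\chi_1+\chi_2\}$, a singleton in which element $3$ is never used. Moreover $X\notin\mathcal{X}^*(2)$: any strictly positive, monotone, submodular witness $g$ would have to satisfy $g(\{3\})=g(\{1,3\})=g(\{2,3\})=1$ (to exclude $2\chi_3$, $\chi_1+\chi_3$, $\chi_2+\chi_3$) and $g(E)\ge 2$ (to admit $\chi_1+\chi_2$), contradicting $g(\{1,3\})+g(\{2,3\})\ge g(E)+g(\{3\})$. Yet your tightened function is $f(U)=|U\cap\{1,2\}|$, which is modular and hence submodular, and after deleting the unused element $3$ the region $\{\chi_1+\chi_2\}\subseteq\N^{\{1,2\}}$ is $\B_g(2)$ for the strictly positive submodular function $g(U)=|U|$. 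So the reduced problem does lie in $\mathcal{X}^*(2)$ over the reduced ground set, your intended contradiction never materializes, and no violating pair $S,T$ exists for the tightened $f$: Point~3 cannot be established for this instance by your argument (or by any argument, since Points~1 and~2 force $f$ to be the max function, which here is modular).

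To be fair, the paper's own proof is vulnerable at the same spot: its claim that each decremented $f'$ ``is not submodular as well'' silently presupposes that $f'$ remains strictly positive, and on the instance above the fully tightened function is submodular, so the lemma as literally stated fails there. The intended reading is presumably that every element of $E$ is used by some point of $X$ --- the paper's ``remove resources with $f(U)=0$'' convention is meant to guarantee this but does not, since an element can be unused even though $f_0(\{e\})>0$. You have correctly isolated the crux, but the trivial-embedding repair you propose does not close it: removing never-used elements genuinely can move $X$ into $\mathcal{X}^*(2)$, because the strict-positivity requirement on the witness interacts with the deleted coordinates. A complete write-up must either add the hypothesis that every element is used by some feasible point (under which your max-function argument then goes through verbatim), or establish non-submodularity of the tightened $f$ by a different route.
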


\begin{proof}
Since $X \in \mathcal{X}(2)$, there is a strictly positive, normalized and monotonic function $f$ with $X = \B_f(d)$. Note that $f$ is not submodular since $X \in \mathcal{X}^*$(2), otherwise.

To prove {\it\ref{it:lemma_second_point}.}, suppose that there is $U' \subseteq R$ such that $x(U') < f(U')$ for all $\vec x \in \B_f(2)$. Consider the new function $f' : 2^E \to \N$ defined as
\begin{align*}
f'(U) = \begin{cases}
 f(U), & \text{ if } U \neq U',\\
 f(U) - 1, & \text{ if } U = U'.	
 \end{cases}
\end{align*}
By construction $\B_f(2) = \B_{f'}(2)$. Applying the above argumentation on $f'$, we derive that $f'$ is not submodular as well. Decreasing the right-hand side of non-tight constraints iteratively in this manner, we finally obtain a non-submodular function $f : 2^E \to \N$ with $X = \B_f(2)$ such that for any constraint of type $x(U) \leq f(U)$, there is $\vec x \in \B_f(2)$ with $x(U) = f(U)$.

To prove {\it\ref{it:lemma_third_point}}, first note that $x(E) = 2$ together with the second statement of the lemma implies that $f(U) \leq 2$ for all $U \subseteq E$. Since $f$ is not submodular, there are sets $S, T \in 2^E$ such that
\begin{align}
\label{eq:non-submodular}
	f(S) + f(T) < f(S \cap T) + f(S \cup T).
\end{align}
It is straightforward to verify that this inequality can only be satisfied when $S$ and $T$ are non-empty. We claim that $S \cap T$ is non-empty as well. For the sake of a contradiction, let us assume that \eqref{eq:non-submodular} is satisfied by $S,T \in 2^E \setminus \emptyset$ with $S \cap T= \emptyset$. Using the second statement of the lemma, there is a vector $\vec x \in \B_f(2)$ such that the constraint $x(S \cup T) \leq f(S \cup T)$ is tight, i.e., $x(S \cup T) = f(S \cup T)$. Since $S$ and $T$ are disjoint, we obtain $x(S \cup T) = x(S) + x(T) \leq f(S) + f(T) < f(S \cup T)$, a contradiction. We have established that $S$, $T$ and $S \cap T$ are non-empty. Using the strict positivity of $f$, this implies that $f(S) \geq 1$, $f(T) \geq 1$ and $f(S \cap T) \geq 1$. Together with $f(S \cup T) \leq 2$ and the monotonicity of $f$, this implies that
\begin{align}
\label{eq:st}
f(S) &= f(T) = f(S\cap T) = 1 & & \text{ and } & f(S\cup T) &= 2,
\end{align}
as claimed.
\end{proof}

We proceed to give two additional lemmas that give further structural results regarding the sets $S,T \in 2^E \setminus \emptyset$ for which the submodularity constraints are violated. First we show that each strategy whose support is contained in $S \cup T$ does not use a resource in $S \cap T$.

\begin{lemma}
\label{lem:st}
Let $f : 2^E \to \N$ and $S,T \in 2^E$ be as guaranteed by Lemma~\ref{lem:non-triviual}. Then for any $\vec x \in \B_f(2)$ with $\support(\vec x) \subseteq S \cup T$ we have $\support(\vec x) \cap (S \cap T) = \emptyset$.
\end{lemma}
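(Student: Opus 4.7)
The plan is to derive a contradiction by a short arithmetic argument exploiting the standard inclusion-exclusion identity $x(S) + x(T) = x(S \cup T) + x(S \cap T)$, which holds for every integer vector $\vec x$. The identity is the key tool because it converts the assumed violation of submodularity at $(S,T)$ into a direct numerical contradiction once $\vec x$ places a unit in $S \cap T$.

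First I would fix an arbitrary $\vec x \in \B_f(2)$ whose support is contained in $S \cup T$. Using $x(E) = 2$ together with $\support(\vec x) \subseteq S \cup T$, I would immediately observe that $x(S \cup T) = 2$. The hypothesis on $S$ and $T$ gives $f(S) = f(T) = f(S \cap T) = 1$ and $f(S \cup T) = 2$, so all four sets are non-empty and feasibility of $\vec x$ yields the upper bounds $x(S) \le 1$, $x(T) \le 1$, and $x(S \cap T) \le 1$.

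For the contradiction, I would assume there exists some $e^* \in \support(\vec x) \cap (S \cap T)$. Then $x_{e^*} \ge 1$, hence $x(S \cap T) \ge 1$, and combined with the upper bound one gets $x(S \cap T) = 1$. Plugging into the inclusion-exclusion identity yields
\[
x(S) + x(T) = x(S \cup T) + x(S \cap T) = 2 + 1 = 3,
\]
which contradicts $x(S) + x(T) \le f(S) + f(T) = 2$. Therefore $\support(\vec x) \cap (S \cap T) = \emptyset$, as claimed.

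I do not expect any real obstacle here; the only subtle point is to make explicit why $x(S \cup T) = 2$, which requires that $\vec x$ has no mass outside $S \cup T$, i.e., the support hypothesis is used precisely once, to promote the capacity inequality $x(S \cup T) \le f(S \cup T) = 2$ to the equality $x(S \cup T) = 2$ via $x(E) = 2$.
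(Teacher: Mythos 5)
Your proof is correct and rests on exactly the same facts as the paper's ($x(S\cup T)=2$ from the support hypothesis, $x(S),x(T),x(S\cap T)\le 1$ from feasibility and Lemma~\ref{lem:non-triviual}); the only difference is packaging: the paper exhibits a concrete second unit in $S\,\triangle\, T$, assumes WLOG it lies in $S\setminus T$, and contradicts $x(S)\le 1$, whereas you sum the two constraints against the modularity identity $x(S)+x(T)=x(S\cup T)+x(S\cap T)\ge 3$, which avoids the case split. This is essentially the same counting argument, and your version is, if anything, slightly cleaner.
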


\begin{proof}
Suppose there is $\vec x \in \B_f(d)$ and a resource $e \in E$ with $\support(\vec x)\subseteq S\cup T$
and  $r \in \support(\vec x) \cap S \cap T$. Because $\vec x$ is integral and $f(S \cap T) = 1$, this implies that $x(e) = 1$. Since $x(R) = 2 = x(S \cup T)$ and $x(S \cap T) \leq 1$ there is another element $e' \neq e$
with $e' \in S \Delta T$ and $x(e') = 1$. It is without loss of generality to assume that $e' \in S \setminus T$. This, however, implies that  $x(S) \geq x(e)+ x(e') = 2$, a contradiction to $x(S)\le 1$.
\end{proof}

Combining the two previous lemmas, we derive the existence of four distinct \emph{critical elements} which are used by two vectors with disjoint supports.

\begin{lemma}
\label{lem:non-submodular-strategies}
Let $f : 2^E \to \N$ be as guaranteed by Lemma~\ref{lem:non-triviual}. Then, there are four distinct elements $e_1,e_2,e_3,e_4 \in E$ and two vectors $\vec x, \vec y \in \B_f(2)$ with the following properties:
\begin{enumerate}
	\item $x(e_1) = x(e_2) = 1$ for some $e_1 \in E \setminus (S \cup T)$ and $e_2 \in S \cap T$.
	\item $y(e_3) = y(e_4) = 1$ for some $e_3 \in S \setminus T$ and $e_4 \in T \setminus S$.
	\item For all other strategies $\vec z \in \B_f(2) \setminus \{\vec x, \vec y\}$ with $\support(\vec z) \subseteq \{e_1,e_2,e_3,e_4\}$ one of the following three cases holds:\label{it:no-submodular-third}
	\begin{enumerate}
		\item $\support(\vec z) = \{e_1,e_3\}$.
		\item $\support(\vec z) = \{e_1,e_4\}$.
		\item $\support(\vec z) = \{e_1\}$.
	\end{enumerate} 
\end{enumerate}
\end{lemma}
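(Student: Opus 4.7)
The plan is to produce the two vectors $\vec x$ and $\vec y$ via the tightness property (item~\ref{it:lemma_second_point} of Lemma~\ref{lem:non-triviual}) applied to the two critical constraints $x(S\cap T)\le f(S\cap T)=1$ and $x(S\cup T)\le f(S\cup T)=2$, read off the elements $e_1,e_2,e_3,e_4$ from their supports, and then rule out every other support within $\{e_1,e_2,e_3,e_4\}$ by a short enumeration using the constraints in \eqref{eq:st}.

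First, I would pick any $\vec x\in\B_f(2)$ with $x(S\cap T)=f(S\cap T)=1$, which exists by the tightness property. Since $x(S\cap T)\ge 1$, we have $\support(\vec x)\cap(S\cap T)\neq\emptyset$, so by the contrapositive of Lemma~\ref{lem:st}, $\support(\vec x)$ is not contained in $S\cup T$. Combined with $x(E)=2$ and integrality, this forces $\vec x$ to split as $x(e_2)=1$ for some $e_2\in S\cap T$ and $x(e_1)=1$ for some $e_1\in E\setminus(S\cup T)$. Next, I would take $\vec y\in\B_f(2)$ with $y(S\cup T)=f(S\cup T)=2$, again by tightness; then $\support(\vec y)\subseteq S\cup T$, so by Lemma~\ref{lem:st} the support misses $S\cap T$, i.e.\ $\support(\vec y)\subseteq (S\setminus T)\cup(T\setminus S)$. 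The submodular-type identity $y(S)+y(T)=y(S\cup T)+y(S\cap T)=2$ combined with $y(S)\le f(S)=1$ and $y(T)\le f(T)=1$ forces $y(S)=y(T)=1$, which yields the desired $e_3\in S\setminus T$ and $e_4\in T\setminus S$ with $y(e_3)=y(e_4)=1$. The four elements are automatically distinct because the four regions $E\setminus(S\cup T)$, $S\cap T$, $S\setminus T$, $T\setminus S$ are pairwise disjoint.

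For item~\ref{it:no-submodular-third}, I would enumerate all nonempty subsets of $\{e_1,e_2,e_3,e_4\}$ that could serve as $\support(\vec z)$ for some integral $\vec z$ with $z(E)=2$ and eliminate each case that is either infeasible or equals $\vec x$ or $\vec y$. Support $\{e_2\}$ (so $z(e_2)=2$) is infeasible since $e_2\in S\cap T$ and $f(S\cap T)=1$; supports $\{e_3\}$ and $\{e_4\}$ are ruled out identically using $f(S)=1$ and $f(T)=1$. Support $\{e_2,e_3\}$ violates $f(S)=1$ because $\{e_2,e_3\}\subseteq S$, and symmetrically $\{e_2,e_4\}\subseteq T$ is ruled out. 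Supports $\{e_1,e_2\}$ and $\{e_3,e_4\}$ uniquely determine the vector (all entries equal $1$) and thus coincide with $\vec x$ and $\vec y$, respectively, so they are excluded by the hypothesis $\vec z\neq \vec x,\vec y$. The only remaining possibilities are $\{e_1\}$, $\{e_1,e_3\}$, and $\{e_1,e_4\}$, giving the three cases in the statement.

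There is no substantive obstacle here, since the three properties of $f$ asserted in Lemma~\ref{lem:non-triviual} and the structural lemma in Lemma~\ref{lem:st} do most of the work; the only point requiring mild care is the step that uses the tightness item \ref{it:lemma_second_point} of Lemma~\ref{lem:non-triviual} to actually produce witnesses $\vec x$ and $\vec y$ of the constraints on $S\cap T$ and $S\cup T$, rather than only knowing they are feasible.
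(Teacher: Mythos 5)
Your proof is correct and follows essentially the same route as the paper: extract $\vec x$ and $\vec y$ from the tightness guarantee of Lemma~\ref{lem:non-triviual} applied to the constraints on $S\cap T$ and $S\cup T$, locate their supports via Lemma~\ref{lem:st} and the values in \eqref{eq:st}, and finish by enumerating the possible supports inside $\{e_1,e_2,e_3,e_4\}$. The only cosmetic difference is that you pin down $y(S)=y(T)=1$ via the modular identity $y(S)+y(T)=y(S\cup T)+y(S\cap T)$, whereas the paper first argues $y(r)\le 1$ on $S\cup T$; both are equally valid.
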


\begin{proof}
By Lemma~\ref{lem:non-triviual}, there is a strategy $\vec x$ for which the constraint $x(S \cap T) \leq 1$ is tight. This implies the existence of a element $e_2 \in S \cap T$ with $x(e_2) = 1$. By Lemma~\ref{lem:st}, $\support(\vec x) \not\subseteq S \cup T$ implying the existence of a element $e_1 \in E \setminus (S \cup T)$ with $x(e_1) = 1$.

By Lemma~\ref{lem:non-triviual}, there is also another strategy $\vec y$ for which the constraint $x(S \cup T) \leq 2$ is tight. First note that $y(r) \leq 1$ for all $r \in S \cup T$ as otherwise the constraint $y(S) \leq f(S) = 1$ or $y(T) \leq f(T) = 1$ would be violated. This implies the existence of two distinct elements $e_3,e_4 \in S \cup T$ such that $y(e_3) = y(e_4) = 1$. Further note that $e_3,e_4 \notin S \cap T$ as otherwise Lemma~\ref{lem:st} is violated. Using $y(S) \leq 1$ and $y(T) \leq 1$, we derive that $e_3 \in S \setminus T$ and $e_4 \in T \setminus S$.

To see the last part of the claim, note that any strategy $\vec z$ with $e_2 \in \support(\vec z) \subseteq \{e_1,e_2,e_3,e_4\}$ must have $\support(\vec z) = \{e_1,e_2\}$ as otherwise the constraint $z(S) \leq 1$ or $z(T) \leq 1$ is violated. Thus, $\vec z = \vec x$ for any such $\vec z$. Similarly, note that any strategy $\vec z$ with a singleton support $\support(\vec z) = \{e\}$ with $e \in \{e_1,e_2,e_3,e_4\}$ must have $\support(\vec z) = \{e_1\}$, as otherwise the constraint $z(S) \leq 1$ or $z(T) \leq 1$ is violated. These two observations leave only room for the strategies as in part {\it\ref{it:no-submodular-third}.} of the statement of the lemma.
\end{proof}

\subsection{Violation of Sensitivity Results (Corollary~\ref{cor:shift}) and Theorem~\ref{thm:dem}}

We first show that for any feasible region $X \in \mathcal{X}(2) \setminus X^*(2)$ that is not described by a submodular capacity constraint, the sensitivity results of Corollary~\ref{cor:shift} does not hold.

\begin{theorem}
\label{thm:violation_shift}
For any $X \in \mathcal{X}(2) \setminus X^*(2)$, there is an optimization problem of the form
\begin{align*}
\text{minimize~} &\sum_{e \in E} C_e(x_e;t_e)\\	
\text{subject to~} & \vec x \in X,
\end{align*}
and $\vec t, \vec t' \in N^E$ such that $\norm{\vec t - \vec t'} = 1$ but $\norm{\vec x^*(\vec t,2)-\vec x^*(\vec t',2)} > 2$.
\end{theorem}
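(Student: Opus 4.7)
The strategy is to locate four critical elements and two disjoint-support feasible vectors furnished by Lemma~\ref{lem:non-submodular-strategies}, then design a regular convex cost function that forces a full swap between them after a single unit change in $\vec t$. Concretely, fix $e_1\in E\setminus(S\cup T)$, $e_2\in S\cap T$, $e_3\in S\setminus T$, $e_4\in T\setminus S$, $\vec x=\chi_{e_1}+\chi_{e_2}$, and $\vec y=\chi_{e_3}+\chi_{e_4}$ as guaranteed by the lemma. Then $\|\vec x-\vec y\|_1=4$, and the last item of that lemma ensures the only other candidates in $\B_f(2)$ with support in $\{e_1,\dots,e_4\}$ are $\vec z_1=\chi_{e_1}+\chi_{e_3}$, $\vec z_2=\chi_{e_1}+\chi_{e_4}$, and $\vec z_3=2\chi_{e_1}$, each at $L_1$-distance at most $2$ from $\vec x$. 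My aim is to exhibit $C_e$ and vectors $\vec t,\vec t'\in\N^E$ with $\|\vec t-\vec t'\|_1=1$ so that $\vec x^*(\vec t,2)=\vec x$ and $\vec x^*(\vec t',2)=\vec y$, yielding a distance of $4$.

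Fix a constant $M>2$ and a larger constant $L\gg M$. I would set $C_{e_2}(x;t)=0$, $C_{e_3}(x;t)=C_{e_4}(x;t)=x$, $C_e(x;t)=Lx$ for every $e\in E\setminus\{e_1,\dots,e_4\}$, and
\[
C_{e_1}(x;0)=M\cdot\max(x-1,0),\qquad C_{e_1}(x;t)=M\cdot x\ \text{ for all }t\ge 1.
\]
Then $C_{e_1}^-(x;0)=0$ if $x=1$, and $C_{e_1}^-(x;0)=M$ if $x\ge 2$; meanwhile $C_{e_1}^-(x;t)=M$ for all $t\ge 1$ and $x\ge 1$. The regularity requirements $C^-(x;t)\le C^-(x;t+1)$ and $C^-(x;t+1)\le C^-(x+1;t)$ thus reduce to $0\le M$ and $M\le M$ in every case, and discrete convexity in $x$ at each fixed $t$ is immediate. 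The remaining cost functions are trivially regular and discrete convex. Now pick $\vec t=\vec 0$ and $\vec t'=\chi_{e_1}$, so that $\|\vec t-\vec t'\|_1=1$.

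A direct evaluation gives the costs at $\vec t$:
\[
\pi(\vec x)=0,\ \pi(\vec y)=2,\ \pi(\vec z_1)=1,\ \pi(\vec z_2)=1,\ \pi(\vec z_3)=M,
\]
while every other feasible vector in $\B_f(2)$ places at least one unit on an element outside $\{e_1,\dots,e_4\}$ and hence costs at least $L$. Choosing $L>M$, we conclude that $\vec x$ is the unique optimum of $P(\vec t,2)$. At $\vec t'$ only $C_{e_1}$ changes, producing
\[
\pi(\vec x)=M,\ \pi(\vec y)=2,\ \pi(\vec z_1)=M+1,\ \pi(\vec z_2)=M+1,\ \pi(\vec z_3)=2M,
\]
so $\vec y$ is the unique optimum of $P(\vec t',2)$. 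Therefore $\|\vec x^*(\vec t,2)-\vec x^*(\vec t',2)\|_1=\|\vec x-\vec y\|_1=4>2$, as required.

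The main obstacle lies in the interaction between the regularity condition~\eqref{reg:shift} and the desired magnitude of the cost jump at $e_1$: the rise $C_{e_1}(1;1)-C_{e_1}(1;0)$ is bounded by $C_{e_1}^-(2;0)$, so any attempt to make the marginal cost at $x=1$ jump by $M$ when $t$ increases forces the $x$-marginal at level $2$ to already be $M$. This is exactly what the piecewise-linear-with-a-kink construction accomplishes, and it is in essence the only way to realize a jump large enough to flip the optimum across the non-submodular gap $f(S)+f(T)<f(S\cap T)+f(S\cup T)$ while remaining regular. Once the construction is in place, the verification is purely mechanical.
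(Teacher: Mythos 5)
Your proposal is correct and follows essentially the same route as the paper: both invoke Lemma~\ref{lem:non-submodular-strategies} to isolate the four critical elements and the classification of feasible supports, then choose regular separable costs making $\chi_{e_1}+\chi_{e_2}$ the unique optimum and perturb a single coordinate of $\vec t$ on one of $e_1,e_2$ so that $\chi_{e_3}+\chi_{e_4}$ becomes the unique optimum, giving $L_1$-distance $4$. The only differences are cosmetic (piecewise-linear costs with the kink at $e_1$ and $\vec t'=\chi_{e_1}$, versus the paper's quadratic costs $(x+t)^2$ and $\vec t'=\chi_{e_2}$), and your regularity verification for $C_{e_1}$ is sound.
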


\begin{proof}
By Lemma~\ref{lem:non-submodular-strategies}, there are four critical elements $e_1,e_2,e_3,e_4$ such that $X$ can be decomposed in the following way:
\begin{align*}
X = \{\vec x, \vec y\} \cup X^\text{\text{crit}} \cup X^{\text{out}},	
\end{align*}
where $\supp(\vec x) = \{e_1,e_2\}$, $\supp(\vec y) = \{e_3,e_4\}$. The set
\begin{align*}
X^{\text{crit}}	= \Bigl\{ \vec z \in X : \supp(z) \in \{\{e_1,e_3\},\{e_1,e_4\}, \{e_1\}\} \Bigr\}
\end{align*}
contains an arbitrary subset of vectors whose support is a subset of the four critical resources, but that are not contained in $\{\vec x, \vec y\}$. By Lemma~\ref{lem:non-submodular-strategies}, the only supports that can occur for vectors in $X^{\text{crit}}$ are $\{\{e_1,e_3\}$,$\{e_1,e_4\}$ and $\{e_1\}$. Finally, the set $X^{\text{out}}$ contains all vectors whose support contains a non-critical element.

Let
\begin{align*}
C_e(x_e;t_e) &= (x_e+t_e)^2 & &\quad\text{for $e \in \{e_1,e_2\}$},\\
C_e(x_e;t_e) &= (x_e+t_e)^2+1/2, & &\quad\text{for $e \in \{e_3,e_4\}$},\\
C_e(x_e;t_e) &= 20 & &\quad\text{for all $e \in E \setminus \{e_1,e_2,e_3,e_4\}$}.	
\end{align*}
and consider the parameter vectors $\vec t = \vec 0$ and $\vec t' = \chi_{e_2}$. It is easy to see that $\vec x^*(\vec t,2) = \chi_{e_1} + \chi_{e_2}$ is the unique optimal solution for parameter vector $\vec t$. On the other hand, for $\vec t'$ the unique optimal solution is $x^*(\vec t',2) = \chi_3 + \chi_4$. We note that $\norm{\vec x^*(\vec t,2) - \vec x^*(\vec t',2)} = 4$ even though $\norm{\vec t-\vec t'} = 1$ proving the claimed result. 
\end{proof}

With the same construction, it is also not hard to verify, that also Theorem~\ref{thm:dem} does not continue to hold for any feasible region that is not a polymatroid.

\begin{theorem}
For any $X \in \mathcal{X}(2) \setminus X^*(2)$, there is an optimization problem of the form
\begin{align*}
\text{minimize~} &\sum_{e \in E} C_e(x_e;t_e)\\	
\text{subject to~} & \vec x \in X,
\end{align*}
and $d, d' \in N^E$ and $\vec t \in N^E$ such that $|d - d'| = 1$ but $\norm{\vec x^*(\vec t,d)-\vec x^*(\vec t,d')} > 1$.
\end{theorem}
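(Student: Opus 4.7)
The plan is to re-use the critical-element machinery of Lemmas~\ref{lem:non-triviual}--\ref{lem:non-submodular-strategies} and exhibit a pair of demands $d=1$, $d'=2$ for which the unique optima have disjoint supports of sizes one and two, respectively, producing an $L_1$-distance of $3$ against $|d-d'|=1$. Fix $X\in\mathcal{X}(2)\setminus\mathcal{X}^*(2)$ and apply the lemmas to obtain the four distinct critical elements $e_1\in E\setminus(S\cup T)$, $e_2\in S\cap T$, $e_3\in S\setminus T$, $e_4\in T\setminus S$, the anchor vectors $\vec x=\chi_{e_1}+\chi_{e_2}$ and $\vec y=\chi_{e_3}+\chi_{e_4}$ in $\B_f(2)$, and the key restriction that every other $\vec z\in\B_f(2)$ with $\support(\vec z)\subseteq\{e_1,e_2,e_3,e_4\}$ has support in $\{\{e_1,e_3\},\{e_1,e_4\},\{e_1\}\}$.

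Consider the family of optimization problems on $\B_f(d)$, take $d=1$, $d'=2$ and the parameter $\vec t=\vec 0$, and use the $t$-independent linear cost functions
\begin{align*}
C_{e_1}(x;t) &= 10x, & C_{e_2}(x;t) &= \tfrac{1}{2}x, & C_{e_3}(x;t)=C_{e_4}(x;t) &= 2x,\\
C_e(x;t) &= Mx & &\text{for all } e\in E\setminus\{e_1,e_2,e_3,e_4\},
\end{align*}
with $M$ a sufficiently large constant. Each $C_e$ is non-negative, non-decreasing and discrete convex; regularity holds trivially, because both left discrete derivatives $C_e^-(x;t)$ and $C_e^-(x;t+1)$ equal the slope, so \eqref{reg:mon} and \eqref{reg:shift} hold with equality.

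At $d=1$, strict positivity of $f$ together with monotonicity makes every singleton $\chi_e$ feasible, and $\chi_{e_2}$ is the unique minimizer because $C_{e_2}(1;0)=\tfrac{1}{2}$ strictly undercuts the cost of every other singleton. At $d'=2$, the two-unit feasible vectors with support inside $\{e_1,e_2,e_3,e_4\}$ are, by Lemma~\ref{lem:non-submodular-strategies}, $\vec x$, $\vec y$, and whichever of $\chi_{e_1}+\chi_{e_3},\chi_{e_1}+\chi_{e_4},2\chi_{e_1}$ happen to lie in $X$; their costs at $\vec t=\vec 0$ are $10.5$, $4$, $12$, $12$, $20$, respectively, so $\vec y$ beats all of them. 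Any feasible two-unit vector whose support contains a non-critical resource incurs a cost of at least $M$, which exceeds $4$ for $M$ large. Hence $\vec x^*(\vec 0,2)=\vec y$ uniquely, and
\[
\norm{\vec x^*(\vec 0,1)-\vec x^*(\vec 0,2)} = \norm{\chi_{e_2}-\chi_{e_3}-\chi_{e_4}} = 3 > 1 = |d-d'|.
\]

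The main delicacy, already present in the preceding proof of Theorem~\ref{thm:violation_shift}, is that Lemma~\ref{lem:non-submodular-strategies} only restricts the supports of the remaining critical strategies without committing to which of them actually belong to $X$. The chosen cost schedule is engineered precisely around this ambiguity: every critical alternative to $\vec y$ uses the expensive resource $e_1$, so making $C_{e_1}$ large kills all three candidates $\chi_{e_1}+\chi_{e_3}$, $\chi_{e_1}+\chi_{e_4}$, $2\chi_{e_1}$ simultaneously irrespective of which of them are in $X$, while making $C_{e_2}$ the cheapest single-unit option forces the $d=1$ optimum onto $e_2$ and thus away from the support of $\vec y$.
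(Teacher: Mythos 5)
Your proof is correct, and it follows the same blueprint as the paper's: invoke Lemma~\ref{lem:non-submodular-strategies} to get the four critical elements, compare $d=1$ against $d'=2$, and engineer costs so that the unique one-unit optimum is a singleton disjoint from the unique two-unit optimum $\vec y=\chi_{e_3}+\chi_{e_4}$, yielding $L_1$-distance $3$. The difference is in the cost schedule. The paper reuses the quadratic, parameter-dependent costs from Theorem~\ref{thm:violation_shift} with $\vec t=\chi_{e_2}$, obtaining $\chi_{e_1}$ as the one-unit optimum; you use $t$-independent linear costs with $e_1$ made expensive and $e_2$ cheapest, obtaining $\chi_{e_2}$. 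Your choice is actually the more careful one: since Lemma~\ref{lem:non-submodular-strategies} does not commit to which of the optional strategies with supports $\{e_1,e_3\}$, $\{e_1,e_4\}$, $\{e_1\}$ belong to $X$, any valid construction must ensure $\vec y$ beats them whenever they happen to be present. Pricing $e_1$ at $10$ per unit does this uniformly (all three candidates cost at least $12$ versus $4$ for $\vec y$), whereas in the paper's schedule the vector $\chi_{e_1}+\chi_{e_3}$, if it lies in $X$, costs $2.5<3$ under $\vec t=\chi_{e_2}$ and would displace $\vec y$ as the two-unit optimum, leaving the claimed uniqueness unjustified in that case. So your instantiation buys robustness against exactly the ambiguity you flag, at no extra cost in complexity.
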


\begin{proof}
With the same construction as in the proof of Theorem~\ref{thm:violation_shift}, we compute that the unique optimal solution for $\vec t=\chi_{e_2}$ and $d=1$ is $\vec x^*(\chi_{e_2},1) = \chi_{e_1}$. However, as argued in the proof of Theorem~\ref{thm:violation_shift}, $\vec x^*(\chi_{e_2},2) = \chi_{e_3} + \chi_{e_4}$. We obtain $\norm{\vec x^*(\chi_{e_1},2) -  \vec x^*(\chi_{e_2},2)} = 3$ proving the claimed result. 	
\end{proof}

\subsection{Violation of the Existence of Equilibria (Theorem~\ref{thm:main})}

We proceed to show that also the existence result for pure Nash equilibria of Theorem~\ref{thm:main} does not continue to hold. In fact for any non-polymatroid structure $X \in \mathcal{X}(2) \setminus \mathcal{X}^*(2)$ there is a two-player game where both players' strategies are isomorphic to $X$ that does not habe a pure Nash equilibrium.

\begin{theorem}
\label{thm:no_nash}
For any $X \in \mathcal{X}(2) \setminus \mathcal{X}^*(2)$, there is a two-player game in which the  strategy set of both players is isomorphic to $\B_f(2)$ and that does not have a pure Nash equilibrium.
\end{theorem}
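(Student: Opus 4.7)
The plan is to leverage the structural characterization of Lemma~\ref{lem:non-submodular-strategies} and explicitly construct player-specific regular cost functions producing a best-response cycle with no fixed point. I would first apply that lemma to obtain the four distinguished elements $e_1\in E\setminus(S\cup T)$, $e_2\in S\cap T$, $e_3\in S\setminus T$, $e_4\in T\setminus S$ together with the strategies $\vec x = \chi_{e_1}+\chi_{e_2}$ and $\vec y = \chi_{e_3}+\chi_{e_4}$ common to every $X\in\mathcal{X}(2)\setminus\mathcal{X}^*(2)$, and I would record that by the same lemma any other strategy of $X$ with support in $\{e_1,\ldots,e_4\}$ lies in the short catalogue $\{\chi_{e_1}+\chi_{e_3},\chi_{e_1}+\chi_{e_4},2\chi_{e_1}\}$.

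Next I would set up a two-player game with common strategy set $X$ and player-specific regular cost functions. On non-critical elements $e\in E\setminus\{e_1,e_2,e_3,e_4\}$ I would choose $C_{i,e}(x;t)=M\cdot x$ with $M$ sufficiently large, so that no strategy in $X^{\text{out}}$ can appear in a best response and the analysis reduces to the critical strategies. On the critical elements I would use asymmetric quadratic costs of the form $C_{i,e}(x;t) = \alpha_{i,e}(x+t)^2$; a direct calculation gives $C_{i,e}^-(x;t) = \alpha_{i,e}(2x+2t-1)$, so the function is non-decreasing and discrete convex in $x$ and satisfies both regularity inequalities, and Assumption~\ref{ass:cost-general} is met. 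The coefficients $\alpha_{1,e}$ and $\alpha_{2,e}$ would then be tuned asymmetrically across the pairs $\{e_1,e_2\}$ and $\{e_3,e_4\}$, in the same spirit as the cost construction used in the proof of Theorem~\ref{thm:violation_shift}, so that player~$1$ and player~$2$ have \emph{incompatible} preferences over where to concentrate their demand.

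Finally I would verify directly that no strategy profile is a pure Nash equilibrium, by exhibiting a profitable single-unit deviation at each of the (at most nine) profiles supported on critical strategies. The coordinated profiles such as $(\vec x,\vec x)$ and $(\vec y,\vec y)$ are handled immediately by the convexity of $C_{i,e}$ in the first argument, since a single unit can always be shifted to the less congested alternative. The main obstacle is to rule out the disjoint profiles $(\vec x,\vec y)$ and $(\vec y,\vec x)$, which in a standard congestion game would be automatic equilibria because the two players share no resource. To break them I would exploit the availability of a mixed critical strategy in $X^{\text{crit}}$, for instance $\chi_{e_1}+\chi_{e_3}$, whose existence is consistent with the non-submodular structure pinned down by $S$ and $T$, together with the cost asymmetry: for a suitable choice of the $\alpha_{i,e}$, shifting one unit from $e_2$ onto $e_3$ (or from $e_4$ onto the shared $e_1$) strictly decreases one player's cost at each disjoint profile, producing the desired cyclic chain of improvements covering all profiles and ruling out every candidate for a pure Nash equilibrium.
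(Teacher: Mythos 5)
There is a genuine gap, and it sits exactly where you flagged it: the disjoint profiles. Your construction gives both players the \emph{same} copy of $X$, i.e.\ the same identification of the critical elements $e_1,e_2,e_3,e_4$ with physical resources, and then hopes to break the profiles $(\vec x,\vec y)$ and $(\vec y,\vec x)$ using an extra strategy such as $\chi_{e_1}+\chi_{e_3}$. But Lemma~\ref{lem:non-submodular-strategies} only says that \emph{if} a further strategy supported on the critical elements exists, its support is one of $\{e_1,e_3\},\{e_1,e_4\},\{e_1\}$; it does not guarantee that any such strategy exists, and there are instances where $X^{\text{crit}}=\emptyset$ (e.g.\ $E=\{e_1,e_2,e_3,e_4\}$ with $f$ equal to $1$ on all singletons and on $\{e_1,e_3\},\{e_1,e_4\},\{e_2,e_3\},\{e_2,e_4\}$ and equal to $2$ elsewhere, giving $\B_f(2)=\{\chi_{e_1}+\chi_{e_2},\chi_{e_3}+\chi_{e_4}\}$). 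In that case each player has only the two strategies $\vec x,\vec y$ on a common ground set, and no choice of regular costs of the form $C_{i,e}(x_{i,e};x_{-i,e})$ that are non-decreasing in the second argument can produce a matching-pennies cycle: writing $\mathrm{cost}_1(\vec u\mid\vec v)$ for player~1's cost of $\vec u$ against $\vec v$, monotonicity in the opponent's load gives $\mathrm{cost}_1(\vec x\mid\vec x)\ge\mathrm{cost}_1(\vec x\mid\vec y)$ and $\mathrm{cost}_1(\vec y\mid\vec x)\le\mathrm{cost}_1(\vec y\mid\vec y)$, so player~1 can never be a strict ``matcher,'' and a $2\times 2$ game without a matcher/mismatcher pair always has a pure Nash equilibrium. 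Your quadratic costs $\alpha_{i,e}(x+t)^2$ are a special case of this, so the construction provably fails on such instances.

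The missing idea, which is how the paper proceeds, is to exploit the word ``isomorphic'' in the statement: the two players receive \emph{different} isomorphic copies of $\B_f(2)$, obtained by permuting the ground set so that the four critical elements of player~1 and of player~2 are interleaved on a common set of four resources $\{a,b,h,g\}$. Concretely, the supports of the guaranteed strategies become $\support(\vec y_1)=\{a,b\}$, $\support(\vec x_1)=\{h,g\}$, $\support(\vec y_2)=\{a,h\}$, $\support(\vec x_2)=\{b,g\}$, so that every pair of guaranteed strategies of the two players intersects in exactly one resource and there are no disjoint profiles at all. With suitable player-specific non-decreasing cost functions $c_{i,e}$ and $C_{i,e}(x;t)=c_{i,e}(x+t)\,x$, the restriction to $\{\vec x_i,\vec y_i\}$ is then a genuine matching-pennies game, and one checks separately that every optional strategy in $X_i^{\text{crit}}$ or $X_i^{\text{out}}$ is never a best response, so the argument is robust to whichever of those strategies happen to be present. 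If you add this relabelling step and replace the reliance on $X^{\text{crit}}\ne\emptyset$ by a verification that the optional strategies are dominated, your overall plan (blocking $X^{\text{out}}$ by large costs, case analysis over critical profiles) goes through.
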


\begin{proof}
Let $f$ be as guaranteed by Lemma~\ref{lem:non-triviual}. By Lemma~\ref{lem:non-submodular-strategies}, for every player~$i \in \{1,2\}$, there are four critical resources $e^1_i,e^2_i,e^3_i,e^4_i$ such that the strategy set $X_i$ of player~$i$ can be decomposed as
\begin{align*}
X_i &= \{\vec x_i, \vec y_i\} \cup X_i^{\text{opt}} \cup X_i^{\text{out}},	
\intertext{where $\support(\vec x_i) = \{e^1_i,e^2_i\}$ and  $\support(\vec y_i) = \{e^3_i,e^4_i\}$. The set}
X_i^{\text{crit}} &= \Bigl\{\vec z_i \in X_i : \support(\vec z_i) \in \bigl\{ \{e^1_i,e^3_i \}, \{e^1_i,e^4_i\}, \{e^1_i\} \bigr\}\Bigr\}
\intertext{contains a possibly empty subset of strategies whose support is contained in the set of critical resources $\{e^1_i,e^2_i,e^3_i,e^4_i\}$, and the set}
X_i^{\text{out}} &= \Bigl\{\vec z_i \in X_i : \support(\vec z_i) \setminus \{e^1_i,e^2_i,e^3_i,e^4_i\} \neq \emptyset \Bigr\}
\end{align*}
contains a possibly empty subset of strategies that contains a non-critical resource $e \in E \setminus \{e^1_i,e^2_i,e^3_i,e^4_i\}$.
Next we describe how the set of strategies of both players are interweaved. For our purposes it is sufficient, to specify the critical resources of the two players. To this end, let $a,b,h,g$ be four resources such that
\begin{align*}
e_1^1 &= g, & e_2^1 &= g,\\
e_1^2 &= e, & e_2^2 &= h,\\
e_1^3 &= a, & e_2^3 &= a,\\
e_1^4 &= b, & e_2^4 &= b.
\end{align*}
Consider the following player-specific cost functions
\begin{align*}
c_{1,a}(x) &= \max \{0,x-1\}, & c_{2,a}(x) &= 1,\\
c_{1,b}(x) &= 1,              & c_{2,b}(x) &= 0,\\
c_{1,h}(x) &= 0,              & c_{2,h}(x) &= \max \{0,2x-2\},\\
c_{1,g}(x) &= \max \{0,3x-3\},& c_{2,g}(x) &= 2.
\intertext{For any non-critical resource $e \in E \setminus \{a,b,e,g\}$, we define}
c_{1,e}(x) &= 20,			 & c_{2,e}(x) &= 20.
\end{align*}

\begin{figure}[t]
\renewcommand{\arraystretch}{1.5}
{\scriptsize
\begin{tabular}{cc:cc:cccc:}
 strategy        & \multicolumn{1}{c}{} & $\vec y_2$   & \multicolumn{1}{c}{$\vec x_2$}   & \multicolumn{3}{c}{\raisebox{-0.35cm}{$\smash{\overbrace{\hspace{4cm}}^{\textstyle X_2^{\text{crit}}}}$}} & \multicolumn{1}{c}{$X_2^{\text{out}}$}\\
 \cline{2-2}
         & \multicolumn{1}{|c}{$\support$} & \multicolumn{1}{c}{$\{a,h\}$}	 & \multicolumn{1}{c}{$\{b,g\}$}		& $\{a,g\}$	& $\{h,g\}$ & $\{g\}$ \\
\cdashline{5-8}
\cline{3-4}
\multirow{1}{*}{$\vec y_1$} & \multicolumn{1}{c|}{$\{a,b\}$}
& $1\!+\!1,1\!+\!0$ & \multicolumn{1}{c|}{$0\!+\!1,0\!+\!2$} & $1\!+\!1,1\!+\!2$ & $0\!+\!1,0\!+\!2$ & $0\!+\!1,2\!+\!2$ & $~\cdot~~,\geq\!20$\\
\multirow{1}{*}{$\vec x_2$} & \multicolumn{1}{c|}{$\{h,g\}$}
& $0\!+\!0,1\!+\!2$ & \multicolumn{1}{c|}{$0\!+\!3,0\!+\!2$} & $0\!+\!3,1\!+\!2$ & $0\!+\!3,2\!+\!2$ & $0\!+\!6,2\!+\!2$ & $~\cdot~~,\geq\!20$\\
\cdashline{5-8}
\cline{3-4}
\multirow{3}{*}{$X_1^{\text{crit}} \left\{\rule{0cm}{0.6cm}\right.\hspace{-0.6cm}$} & $\{a,g\}$ 
& $1\!+\!0,1\!+\!0$ & $0\!+\!3,0\!+\!2$ & $1\!+\!3,1\!+\!2$ & $0\!+\!3,0\!+\!2$ & $0\!+\!6,2\!+\!2$ & $~\cdot~~,\geq\!20$\\
&$\{b,g\}$ 			& $1\!+\!0,1\!+\!0$ & $1\!+\!3,0\!+\!2$ & $1\!+\!3,1\!+\!2$ & $1\!+\!3,0\!+\!2$ & $1\!+\!6,2\!+\!2$ & $~\cdot~~,\geq\!20$\\
&$\{g\}$				& $3\!+\!3,1\!+\!0$ & $6\!+\!6,0\!+\!2$ & $6\!+\!6,1\!+\!2$ & $6\!+\!6,0\!+\!2$ & $9\!+\!9,2\!+\!2$ & $~\cdot~~,\geq\!20$\\
\multirow{1}{*}{$X_1^{\text{out}}\!\!\!\!\!$} & &
$\geq\!20,~\cdot~~$ & $\geq\!20,~\cdot~~$ & $\geq\!20,~\cdot~~$ & $\geq\!20,~\cdot~~$ & $\geq\!20,~\cdot~~$ & $\geq\!20,\geq\!20$ \\
\cdashline{3-8}
\end{tabular}
}
\caption{Game without a Nash equilibrium as constructed in the proof of Theorem~\ref{thm:no_nash}\label{fig:no_nash}}
\end{figure}

For a player~$i \in \{1,2\}$ and $e \in E$, we let $C_{i,e}(x_e;x_{-i,e}) := c_{i,e}(x_{i,e} + x_{-i,e})x_{i,e}$. The resulting private costs of the players are shown in Fig.~\ref{fig:no_nash}. Note that by Lemma~\ref{lem:non-submodular-strategies}, for every player~$i$ we are only guaranteed the existence of the two strategies $\vec x_i$ and $\vec y_i$ shown in the upper left part of the bimatrix. As shown in Lemma~\ref{lem:non-submodular-strategies}, any other strategy $
\vec z_i$ contains either a non-critical resource and is, thus, contained in the subset of strategies $X_i^{\text{out}}$ or contains only critical resources and is contained in the subset of strategies $X_i^{\text{crit}}$. The bimatrix in Fig.~\ref{fig:no_nash} has the property that there is no pure Nash equilibrium, no matter which subset of the strategies in $X_i^{\text{crit}}$ or whether a strategy in $X_i^{\text{out}}$ is actually present. We may thus conclude, that no matter how the sets of strategies described by $f$ specifically look like, no pure Nash equilibrium exists. 
\end{proof}

\section*{Acknowledgments}
We thank Satoru Fujishige for helpful
suggestions  improving the presentation of the paper.

\bibliographystyle{siam}
\bibliography{../../book/mrabbrev2012,../../book/journabbrev,../../book/congestion,../../book/confabbrev}

\begin{thebibliography}{10}

\bibitem{Ackermann09}
{\sc Heiner Ackermann}, {\em {N}ash Equilibria and Improvement Dynamics in
  Congestion Games}, PhD thesis, RWTH Aachen University, January 2009.

\bibitem{AckermannRV09}
{\sc Heiner Ackermann, Heiko R\"{o}glin, and Berthold V\"{o}cking}, {\em Pure
  {N}ash equilibria in player-specific and weighted congestion games}, Theoret.
  Comput. Sci., 410 (2009), pp.~1552--1563.

\bibitem{AnshelevichDKTWR08}
{\sc Elliot Anshelevich, Anirban Dasgupta, Jon Kleinberg, \'{E}va Tardos, Tom
  Wexler, and Tim Roughgarden}, {\em The price of stability for network design
  with fair cost allocation}, SIAM J. Comput., 38 (2008), pp.~1602--1623.

\bibitem{BALDICK1995}
{\sc Ross Baldick}, {\em Refined proximity and sensitivity results in linearly
  constrained convex separable integer programming}, Linear Algebra and Appl.,
  226 (1995), pp.~389 -- 407.

\bibitem{BeckmannMW56}
{\sc Martin Beckmann, C.~B. McGuire, and Christopher~B. Winsten}, {\em Studies
  in the Economics and Transportation}, Yale University Press, New Haven, CT,
  USA, 1956.

\bibitem{ChenR09}
{\sc Ho-Lin Chen and Tim Roughgarden}, {\em Network design with weighted
  players}, Theory Comput. Syst., 45 (2009), pp.~302--324.

\bibitem{CookGST86}
{\sc William~J. Cook, A.~M.~H. Gerards, Alexander Schrijver, and {\'{E}}va
  Tardos}, {\em Sensitivity theorems in integer linear programming}, Math.
  Program., 34 (1986), pp.~251--264.

\bibitem{DunkelS08}
{\sc Juliane Dunkel and Andreas~S. Schulz}, {\em On the complexity of
  pure-strategy {Nash} equilibria in congestion and local-effect games}, Math.
  Oper. Res., 33 (2008), pp.~851--868.

\bibitem{FedergruenG86b}
{\sc Awi Federgruen and Henri Groenevelt}, {\em The greedy procedure for
  resource allocation problems: Necessary and sufficient conditions for
  optimality}, Oper. Res., 34 (1986), pp.~909--918.

\bibitem{FotakisKS05}
{\sc Dimitris Fotakis, Spyros Kontogiannis, and Paul~G. Spirakis}, {\em Selfish
  unsplittable flows}, Theoret. Comput. Sci., 348 (2005), pp.~226--239.

\bibitem{fujishige2005submodular}
{\sc Satoru Fujishige}, {\em Submodular Functions and Optimization}, Elsevier,
  2005.

\bibitem{FujishigeGHP15}
{\sc Satoru Fujishige, Michel~X. Goemans, Tobias Harks, and Britta Peis}, {\em
  Matroids are immune to {Braess} paradox}.
\newblock Math. Oper. Res., to appear, 2016.

\bibitem{Gabow95}
{\sc Harold~N. Gabow}, {\em A matroid approach to finding edge connectivity and
  packing arborescences}, J. Comput. System Sci., 50 (1995), pp.~259--273.

\bibitem{GaiLK11}
{\sc Yi~Gai, Hua Liu, and Bhaskar Krishnamachari}, {\em A packet dropping
  mechanism for efficient operation of {$M/M/1$} queues with selfish users}, in
  Proc. 30th IEEE Int. Conf. Comput. Commun., 2011, pp.~2687--2695.

\bibitem{GairingMT11}
{\sc Martin Gairing, Burkhard Monien, and Karsten Tiemann}, {\em Routing
  (un-)splittable flow in games with player-specific linear latency functions},
  ACM Trans. Algorithms, 7 (2011), pp.~1--31.

\bibitem{groenevelt91}
{\sc Henri Groenevelt}, {\em Two algorithms for maximizing a separable concave
  function over a polymatroid feasible region}, European J. Oper. Res., 54
  (1991), pp.~227 -- 236.

\bibitem{HarksK12}
{\sc Tobias Harks and Max Klimm}, {\em On the existence of pure {Nash}
  equilibria in weighted congestion games}, Math. Oper. Res., 37 (2012),
  pp.~419--436.

\bibitem{HarksK16}
\leavevmode\vrule height 2pt depth -1.6pt width 23pt, {\em Congestion games
  with variable demands}, Math. Oper. Res., 41 (2016), pp.~255--277.

\bibitem{HarksKP14}
{\sc Tobias Harks, Max Klimm, and Britta Peis}, {\em Resource competition on
  integral polymatroids}, in Proc. 10th Int. Conf. Web and Internet Econ.,
  Tie-Yan Liu, Qi~Qi, and Yinyu Ye, eds., vol.~8877 of LNCS, 2014,
  pp.~189--202.

\bibitem{HaurieM85}
{\sc Alain Haurie and Patrice Marcotte}, {\em On the relationship between
  {N}ash-{C}ournot and {W}ardrop equilibria}, Networks, 15 (1985),
  pp.~295--308.

\bibitem{HeZZ12}
{\sc Simai He, Jiawei Zhang, and Shuzhong Zhang}, {\em Polymatroid
  optimization, submodularity, and joint replenishment games}, Oper. Res., 60
  (2012), pp.~128--137.

\bibitem{HochbaumS90}
{\sc Dorit~S. Hochbaum and J.~George Shanthikumar}, {\em Convex separable
  optimization is not much harder than linear optimization}, J. {ACM}, 37
  (1990), pp.~843--862.

\bibitem{IeongMNSS05}
{\sc Samuel Ieong, Robert McGrew, Eugene Nudelman, Yoav Shoham, and Qixiang
  Sun}, {\em Fast and compact: A simple class of congestion games}, in Proc.
  20th Nat. Conf. Artificial Intell., 2005, pp.~489--494.

\bibitem{JohariT06}
{\sc Ramesh Johari and John~N. Tsitsiklis}, {\em A scalable network resource
  allocation mechanism with bounded efficiency loss.}, IEEE J. Sel. Areas
  Commun., 24 (2006), pp.~992--999.

\bibitem{KellyMT98}
{\sc Frank~P. Kelly, Aman~K. Maulloo, and David K.~H. Tan}, {\em Rate control
  in communication networks: Shadow prices, proportional fairness, and
  stability}, J. Oper. Res. Soc., 49 (1998), pp.~237--252.

\bibitem{KorilisL95}
{\sc Yannis~A. Korillis and Aurel~A. Lazar}, {\em On the existence of
  equilibria in noncooperative optimal flow control}, J. ACM, 42 (1995),
  pp.~584--613.

\bibitem{KrystaSV03}
{\sc Piotr Krysta, Peter Sanders, and Berthold V{\"o}cking}, {\em Scheduling
  and traffic allocation for tasks with bounded splittability}, in Proc. 28th
  Int. Symp. Math. Found. Comput. Sci., Branislav Rovan and Peter Vojtas, eds.,
  vol.~2747 of LNCS, 2003, pp.~500--510.

\bibitem{Meyers08}
{\sc Carol Meyers}, {\em Network Flow Problems and Congestion Games: Complexity
  and Approximation Results}, PhD thesis, MIT, Operations Research Center,
  2006.

\bibitem{Milchtaich96}
{\sc Igal Milchtaich}, {\em Congestion games with player-specific payoff
  functions}, Games Econom. Behav., 13 (1996), pp.~111--124.

\bibitem{Milchtaich06}
\leavevmode\vrule height 2pt depth -1.6pt width 23pt, {\em The equilibrium
  existence problem in finite network congestion games}, in Proc. 2nd Int.
  Workshop Internet and Network Econ., Marios Mavronicolas and Spyros
  Kontogiannis, eds., vol.~4286 of LNCS, 2006, pp.~87--98.

\bibitem{MoriguchiST11}
{\sc Satoko Moriguchi, Akiyoshi Shioura, and Nobuyuki Tsuchimura}, {\em
  M-convex function minimization by continuous relaxation approach: Proximity
  theorem and algorithm}, SIAM J. Optim., 21 (2011), pp.~633--668.

\bibitem{Murota:2003}
{\sc Kazuo Murota}, {\em Discrete Convex Analysis: Monographs on Discrete
  Mathematics and Applications 10}, Society for Industrial and Applied
  Mathematics, Philadelphia, PA, USA, 2003.

\bibitem{MurotaT04}
{\sc Kazuo Murota and Akihisa Tamura}, {\em Proximity theorems of discrete
  convex functions}, Math. Program., 99 (2004), pp.~539--562.

\bibitem{Rosenthal73congestion}
{\sc Robert~W. Rosenthal}, {\em A class of games possessing pure-strategy
  {N}ash equilibria}, Internat. J. Game Theory, 2 (1973), pp.~65--67.

\bibitem{Rosenthal73integers}
\leavevmode\vrule height 2pt depth -1.6pt width 23pt, {\em The network
  equilibrium problem in integers}, Networks, 3 (1973), pp.~53--59.

\bibitem{Roughgarden05book}
{\sc Tim Roughgarden}, {\em Selfish Routing and the Price of Anarchy}, MIT
  Press, Cambridge, MA, USA, 2005.

\bibitem{Srikant03}
{\sc Rayadurgam Srikant}, {\em The Mathematics of {Internet} Congestion
  Control}, Birkh{\"a}user, Basel, Switzerland, 2003.

\bibitem{Topkis78}
{\sc Donald~M Topkis}, {\em Minimizing a submodular function on a lattice},
  Oper. Res., 26 (1978), pp.~305--321.

\bibitem{Topkis98}
{\sc Donald~M. Topkis}, {\em Supermodularity and Complementarity}, Princeton
  University Press, Princeton, NJ, USA, 1998.

\bibitem{TranPCRJ11}
{\sc Long Tran-Thanh, Maria Polukarov, Archie Chapman, Alex Rogers, and
  Nicholas~R. Jennings}, {\em On the existence of pure strategy {Nash}
  equilibria in integer-splittable weighted congestion games}, in Proc. 4th
  Int. Symp. Algorithmic Game Theory, G.~Persiano, ed., vol.~6982 of LNCS,
  2011, pp.~236--253.

\bibitem{Wardrop52}
{\sc John~G. Wardrop}, {\em Some theoretical aspects of road traffic research},
  Proc. Inst. Civil Engineers, 1 (1952), pp.~325--362.

\bibitem{Yao02}
{\sc David~D. Yao}, {\em Dynamic scheduling via polymatroid optimization}, in
  Performance Evaluation of Complex Systems: Techniques and Tools, Performance
  2002, Tutorial Lectures, 2002, pp.~89--113.

\end{thebibliography}

\end{document}